\newcommand{\1}{{\rm 1}\kern-0.24em{\rm I}}
\newcommand{\noteJJL}[1]{\textcolor{black}{#1}}
\newcommand{\noteW}[1]{\textcolor{black}{#1}}
\newcommand{\noteJJLnew}[1]{\textcolor{black}{#1}}
\newcommand{\JJL}[1]{\textcolor{black}{#1}}
\newcommand{\JL}[1]{\textcolor{black}{#1}}
\begin{document}

\title{TROM: A Testing-Based Method for Finding Transcriptomic Similarity of Biological Samples%\thanks{Grants or other notes
%about the article that should go on the front page should be
%placed here. General acknowledgments should be placed at the end of the article.}
}
%\subtitle{Do you have a subtitle?\\ If so, write it here}

\titlerunning{TROM: Transcriptome Overlap Measure}        % if too long for running head

\author{Wei Vivian Li         \and
        Yiling Chen  \and
        Jingyi Jessica Li*
}

%\authorrunning{Short form of author list} % if too long for running head

\institute{Wei Vivian Li \at
              Department of Statistics, University of California, Los Angeles, CA 90095-1554, USA \\
                       %  \\
%             \emph{Present address:} of F. Author  %  if needed
           \and
           Yiling Chen \at
              Department of Statistics, University of California, Los Angeles, CA 90095-1554, USA \\
           \and
           Jingyi Jessica Li* (corresponding author) \at
           Department of Statistics, University of California, Los Angeles, CA 90095-1554, USA \\
           Department of Human Genetics, University of California, Los Angeles, CA 90095-7088, USA\\
           \email{jli@stat.ucla.edu}  
}

\date{Received: 3 January 2016 / Accepted: 29 July 2016}
% The correct dates will be entered by the editor

\maketitle

\begin{abstract}
Comparative transcriptomics has gained increasing popularity in genomic research thanks to the development of high-throughput technologies including microarray and next-generation RNA sequencing that have generated numerous transcriptomic data. An important question is to understand the conservation and differentiation of biological processes in different species. We propose a testing-based method TROM (Transcriptome Overlap Measure) for comparing transcriptomes within or between different species, and provide a different perspective to interpret transcriptomic similarity in contrast to traditional correlation analyses. Specifically, the TROM method focuses on identifying associated genes that capture molecular characteristics of biological samples, and subsequently comparing the biological samples by testing the overlap of their associated genes. We use simulation and real data studies to demonstrate that TROM is more powerful in identifying similar transcriptomes and more robust to stochastic gene expression noise than Pearson and Spearman correlations. We apply TROM to compare the developmental stages of six \emph{Drosophila} species, \emph{C. elegans}, \noteW{\emph{S. purpuratus}, \emph{D. rerio}} and mouse liver, and find interesting correspondence patterns that imply conserved gene expression programs in \JJL{the} development \JJL{of these species}. The TROM method is available as an R package on CRAN (\href{https://cran.r-project.org/package=TROM}{https://cran.r-project.org/package=TROM}) with manuals and source codes available at \href{http://www.stat.ucla.edu/~jingyi.li/software-and-data/trom.html}{http://www.stat.ucla.edu/~jingyi.li/software-and-data/trom.html}.
\keywords{transcriptomic similarity measure \and multi-species developmental stages \and robustness to platform differences \and  comparative transcriptomics \and microarray vs. RNA-seq \and \JJL{Pearson correlation coefficient} \and \JJL{Spearman correlation coefficient} \and \JJL{overlap test}}
% \PACS{PACS code1 \and PACS code2 \and more}
% \subclass{MSC code1 \and MSC code2 \and more}
\end{abstract}

\section{Introduction}

Comparative genomics is an important field \JJL{that addresses} evolutionary questions and \JJL{studies} developmental processes across distant species \cite{Pantalacci2015}. Studying transcriptomes is essential for understanding functions of genomic regions and interpreting regulatory relationships of multiple genomic elements \cite{Wang2009}. Comparing transcriptomes of the same species can reveal molecular mechanisms behind the \noteJJL{occurrence} and progression of important biological processes, such as organism development and stem cell differentiation \cite{Shen2012,Labbe2012}. Comparing transcriptomes of different species can help understand the conservation and differentiation of \noteJJL{these} molecular mechanisms in evolution \cite{Li2014}. High-throughput technologies have generated large amounts of publicly available transcriptomic data, creating an unprecedented opportunity for comparing multi-species transcriptomes under various biological conditions.

Finding the transcriptomic similarity and disparity of biological samples is a key step to understand the underlying molecular mechanisms common or unique to them. It is desirable to have a transcriptomic similarity measure that can lead to a clear correspondence pattern of biological samples from the same or different species. Correlation analysis is a classical approach for comparing transcriptomes based on gene expression data. Commonly used measures are Pearson and Spearman correlation coefficients, both of which have played important roles in biological discoveries \cite{Arbeitman2002,Spencer2011,Necsulea2014}. However, in most scenarios neither of them can produce a clear correspondence pattern among biological samples. The main reason is the existence of many housekeeping genes, which would inflate \JJL{correlation} coefficients. \noteJJL{Moreover, correlation measures rely heavily on the accuracy of gene expression data and are \JJL{susceptible} to the \JJL{low signal-to-noise ratios of} lowly expressed genes.} Therefore, it is often difficult to use correlation analysis to find a clear correspondence pattern \noteJJL{of transcriptomes}.

Here we introduce a new \noteJJL{testing-based} measure---transcriptome overlap measure \noteJJL{(TROM)}---to find correspondence \noteJJL{of transcriptomes} in the same or different species. The measure is based on testing the overlap of ``associated genes,'' which represent transcriptomic characteristics of biological samples.
For the purpose of discovering sparse sample relationships, we define a \textit{sample correspondence map} as the binarized mapping pattern resulted from a sample similarity matrix: a none-zero value means that two samples are \JL{\textit{mapped}} to each other, while a zero value means that two samples are \JL{\textit{unmapped}}.
We show that compared to Pearson and Spearman correlations, TROM has better power to detect transcriptome correspondence in simulations and leads to clearer correspondence maps of developmental stages within and between multiple species in real data studies. TROM also provides a systematic approach for selecting associated genes of every biological sample. We show that these associated genes can well capture transcriptomic characteristics and help construct developmental trees in multiple species. In addition, we demonstrate that TROM is robust to data normalization and high-throughput platform difference.%, as it can find a clear correspondence map of \textit{D. melanogaster} developmental stages between microarray and RNA-seq data.}

In Section \ref{sec:method}, we describe the TROM method including the identification of associated genes, the calculation of TROM scores, and the selection of a threshold parameter. In Section \ref{sec:results}, we present real data applications of TROM to large-scale transcriptomic data sets, power analysis of TROM versus Pearson and Spearman correlations, demonstration of the robustness of TROM to data normalization and platform difference, and bioinformatic analyses of the TROM results.

%%%%%%%%%%%%%%%%%%%%%%%%%%%%%%%%%%%%%%%%%%%%%%%%%%%%%%%%%%%%%%%%%%%%%%%
\section{Method}\label{sec:method}
%%%%%%%%%%%%%%%%%%%%%%%%%%%%%%%%%%%%%%%%%%%%%%%%%%%%%%%%%%%%%%%%%%%%%%%
\subsection{\JJL{Associated genes and TROM scores}}
Our method focuses on selecting associated genes to perform a gene set overlap test \cite{Li2014}, which will lead to TROM scores that can be used to compare biological samples. We define \JJL{\textit{associated genes} of a sample} using the following criterion: the genes that have $z$-scores (normalized expression levels across samples) $\geq z$ in the sample, where $z$ is a threshold \noteJJLnew{that can be selected in a systematic approach (please see Section \ref{sec:thre}) or set by users}. \JJL{Based on this definition, associated genes of a sample are those with higher expression in the sample compared to a few other samples. In other words, associated genes are highly expressed in the sample of interest but not always highly expressed in all samples, and they are a superset of sample specific genes. Hence, associated genes capture gene expression characteristics of a sample, and these characteristics are either specific to the sample or shared by a few other samples but not all samples. Associated genes provide a basis for comparing biological samples.} We compare two biological samples by statistically testing the dependence of their associated genes: to compare two samples of the same species, we calculate the significance of the number of their overlapping associated genes (resulting in a within-species TROM score); to compare two samples of different species, we calculate the significance of the number of orthologous gene pairs in their associated genes (resulting in a between-species TROM score).

We consider the two sample-associated gene sets as two samples drawn from the gene population. 
In the within-species scenario, we denote the number of biological samples \JJL{of a} given species as $m$\noteJJLnew{, and use} $X_i$ and $X_j$ ($i,j=1,2,\dots,m$) to denote the associated genes of samples \JJL{$i$ and $j$} to be compared. The gene population consists of all genes of the given species\noteJJLnew{, and the size of the gene population} is denoted as $N$. 
Then to test for the null hypothesis that $X_i$ and $X_j$ are two independent samples drawn from the gene population versus the alternative hypothesis that $X_i$ and $X_j$ are dependent samples, the $p$-value for within-species comparison between \JJL{samples $i$ and $j$} is calculated as
\begin{equation}\label{eq1}
p\text{-value} = \sum_{k = |X_i\cap X_j|}^{\min(|X_i|,|X_j|)} \frac{{N \choose k}{N-k \choose |X_i|-k} {N-|X_i| \choose |X_j|-k}}{{N \choose |X_i|}{N \choose |X_j|}}.
\end{equation}

In the between-species scenario, we denote the numbers of biological samples from \JJL{species 1 and 2} as \noteJJLnew{$m_1$} and $m_2$\noteJJLnew{. The gene population consists of all orthologous gene pairs between the two species, and the number of pairs is denoted as $N$.
\noteW{
The ortholog pairs can be represented as a two-column table with $N$ rows. 
}
 We use $X_i$ ($i=1,2,\dots,m_1$) and $Y_j$ ($j=1,2,\dots,m_2$) to denote the orthologous gene pairs (i.e., rows in the table) that overlap with the associated genes of sample $i$ in species 1 and sample $j$ in species 2, respectively. 
 In other words, $X_i$ (or $Y_j$) represents the orthologous gene pairs that contain the associated genes in sample $i$ of species 1 (or sample $j$ of species 2).
  }  Then to test for the null hypothesis that $X_i$ and $Y_j$ are two independent samples drawn from the population of orthologous gene pairs versus the alternative hypothesis that $X_i$ and $Y_j$ are dependent samples, the $p$-value for between-species comparison \noteJJLnew{of the two samples} is calculated as
\begin{equation}\label{eq2}
p\text{-value} = \sum_{k = |X_i \cap Y_j|}^{\min(|X_i|,|Y_j|)} \frac{{N \choose k}{N-k \choose |X_i|-k} {N-|X_i| \choose |Y_j|-k}}{{N \choose |X_i|}{N \choose |Y_j|}}.
\end{equation}
Then we define the within-species or between-species TROM score as
\begin{equation}\label{eq3}
\text{TROM score} = -\log_{10}(\text{Bonferroni-corrected}\ p\text{-value}),
\end{equation}
\JJL{which describes transcriptome similarity of two biological samples. A larger TROM score represents greater similarity.}

\subsection{Selection of $z$-score threshold}\label{sec:thre}

\begin{figure}[!tpb]%figure1
\centering
\includegraphics[width=0.9\linewidth]{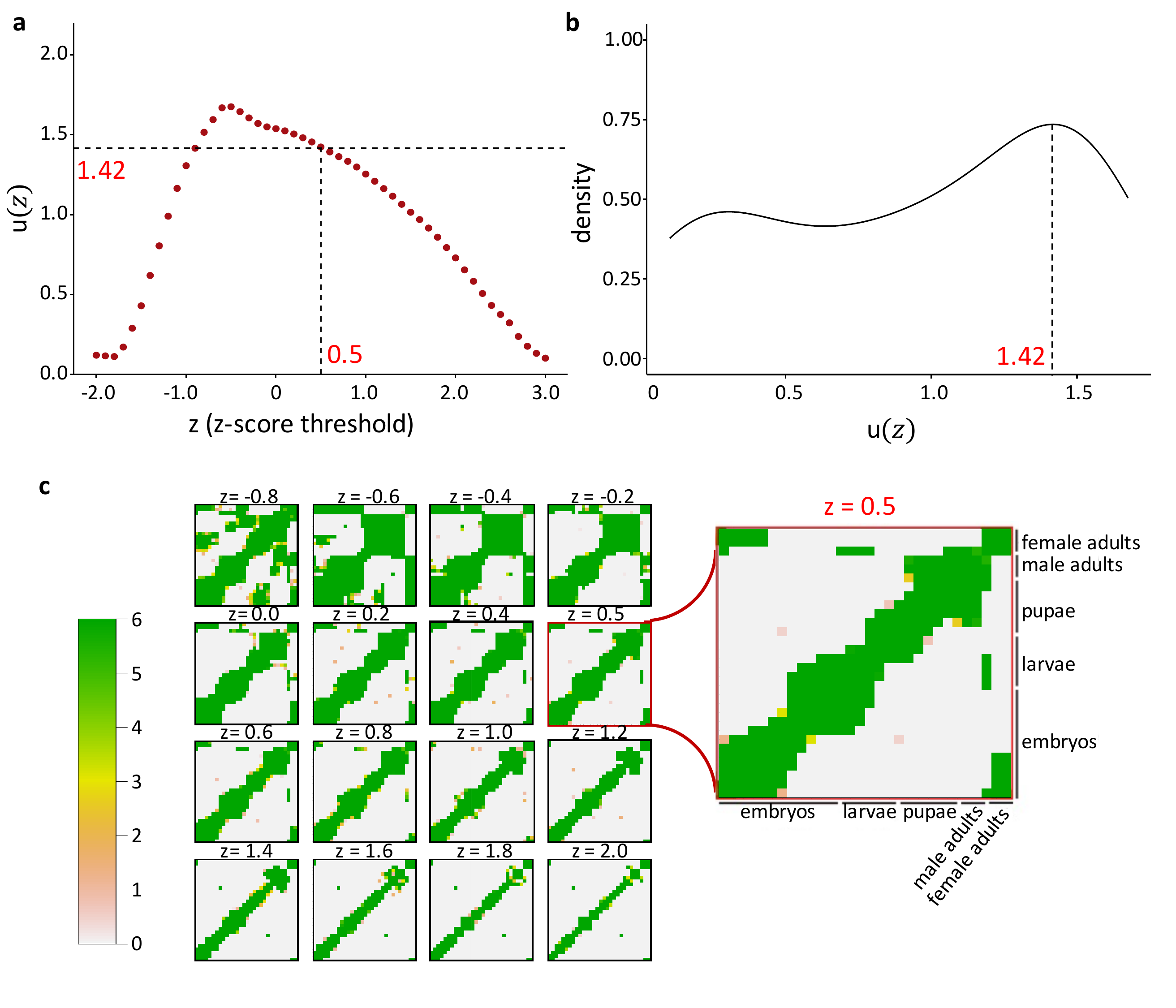}
\caption{Selection of $z$-score threshold for comparing \emph{D. melanogaster} (fly) developmental stages. \textbf{a}: Values of $u(z)$ at different $z$-score thresholds. The horizontal dashed line marks the $\text{mode}(u)$ shown in \textbf{b}, 1.42, which corresponds to $z = 0.5$, the selected $z$-score threshold. \textbf{b}: The density plot of $u(z)$ with Gaussian kernel and banwidth = $0.22$, based on the $u(z)$ values shwon in \textbf{a}. \textbf{c}: Changes of TROM correspondence maps (for 30 fly stages) as the $z$-score thresholds (marked on top of each correspondence map) change. The inset heatmap shows the correspondence map of the chosen threshold $z^* = 0.5$. In each heatmap, both columns and rows represent fly's $30$ developmental stages, and darker colors represent larger TROM scores.
}
\label{zthre}
\end{figure}

The selection of the $z$-score threshold $z$ will directly influence the \JJL{sensitivity and} specificity of sample-associated genes and thus \JJL{affect} the resulting TROM scores. If \noteJJLnew{$z$ is too small}, a large \noteJJLnew{number of associated genes will be selected for every sample and more associated genes will be shared by \JJL{different} samples, \JJL{and thus it becomes} difficult to distinguish different biological samples. If $z$ is too large, only a small number of associated genes will} be identified for each sample and potentially informative genes could be filtered out, \JJL{and thus no similarity of biological samples will be captured by TROM}. 
Although the selection of \noteJJLnew{$z$} is \JJL{ultimately subject to} \noteJJLnew{users'} preference for the resulting correspondence maps \noteJJLnew{(a larger $z$ for a sparser map or a smaller $z$ for a denser map)}, we propose an objective approach to choose \JJL{an} appropriate threshold when no prior knowledge is available. Our approach aims at \noteJJLnew{balancing} two goals: \noteJJLnew{(1)} the threshold should \JJL{help} minimize \JJL{noisy correspondence of biological samples} and thus leads to \JJL{a} sparse correspondence \JJL{map}; \noteJJLnew{(2)} the threshold should help preserve \JJL{strong correspondence of samples} and thus leads to \JJL{a} stable correspondence \JJL{map}.

We use the mean of TROM scores \noteJJLnew{of all pairwise \JJL{comparisons of biological samples} in the correspondence map} as the \noteJJLnew{objective function, which} is defined as
\begin{equation}\label{eq4}
\JJL{ 
u(z)=\log_{10}\left(\frac{\sum^m_{i=1}\sum^m_{j=1, j\neq i}a_{ij}(z)}{m^2-m} + 1\right)
}
\end{equation}
where $m$ is the number of biological samples, \JJL{$A(z)=\left(a_{ij}(z)\right)_{m\times m}$} is the TROM \JJL{score} matrix based on threshold $z$. 
%Then we calculate the probability density function of $u$ and denote it as $f(u)$. 
We select the desirable threshold $z^*$ by the following approach. Considering our goal \JJL{(2)}, we would like $u(z)$ %\WL{to maintain a stable pattern for $z$ values near $z_0^*$. In order to identify this target region where $u(z)$ changes steadily, we consider all the consecutive windows with size $5$ for $z \in [-2,3]$ and step size $0.1$. The window with the smallest variance of $u^\prime(z)$ is our target window, which we denote as $W_z$.
%Also considering our goal (1), we would like to select $z_0^*$ as the largest $z$ value that leads to the stable region of $u^\prime(z)$. Hence, we find $z_0^*$ as
%\begin{equation}
%z_0^* = \sup\left\{z: z\in W_z\right\},   
%\end{equation}
%where $z \in [-2,3]$. According to Lemma \ref{lemma1} and also our empirical observation, $[-2,3]$ is a large enough region to capture the threshold with low computational intensity, as the $u(z)$ values are close to $0$ outside of this region.
%}
%
to be stable for $z$ values near \JJL{$z^*$}. Since similar $u(z)$ values would lead to a peak in the density of \noteW{$u(z)$, denoted as $f(u)$,} we consider the $z$ values corresponding to the peak, that is, $\{z: u(z) =\text{mode}(u)\}$, where mode$(u) = \arg\max_u f(u)$ (i.e., the $u$ value that maximizes the density of $f(u)$ for $u = u(z)$ with $z \in [-2,3]$). Also considering our goal (1), we would like to select $z^*$ as the largest $z$ value that \JJL{leads to} the stable region of $u(z)$. Hence, we find $z^*$ as
\begin{equation}
z^* = \sup\left\{z: u(z) = \text{mode}(u)\right\},   
\end{equation}
where $u = u(z) \text{ for } z \in [-2,3]$. If users desire a sparser correspondence map, we suggest an alternative approach to finding the $z$-score threshold as $z^* = \sup \{ z: u(z) = \text{mode}(u) + \text{sd}(u) \}$, where $\text{sd}(u)$ stands for the standard deviation of the $u(z)$ values. According to Lemma \ref{lemma1} and also our empirical observation, $[-2,3]$ is a large enough region to capture the peak with low computational intensity, as the $u(z)$ values are close to $0$ outside of this region.

\noteW{
As shown in Lemma ~\ref{lemma1}, an important feature of $u(z)$ is that it approaches $0$ when the absolute value of $z$ is large. This is because the entire gene population will be selected as associated genes when the threshold $z$ is small enough while no genes will be selected when $z$ is large enough. In both extreme cases, the resulting TROM score is $0$ for any pair of samples. Because of this feature and the non-negativity of $u(z)$, $u(z)$ must have a maximum at a certain value of $z$. The observed unimodal shape is a typical feature of $u(z)$ for the various species we have investigated. }

\begin{lemma}\label{lemma1}
	$u(z) \rightarrow 0$ as $|z| \rightarrow \infty$.
\end{lemma}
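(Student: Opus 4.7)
The plan is to show that in both limits $z \to -\infty$ and $z \to +\infty$ the associated gene sets $X_i$ (and $Y_j$, in the between-species case) degenerate, so that every pairwise $p$-value in equations~(\ref{eq1}) and~(\ref{eq2}) equals $1$; this forces every TROM score $a_{ij}(z) = 0$ and hence $u(z) = 0$.

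First I would observe that in any given dataset the collection of per-sample gene $z$-scores is finite; let $z_{\min}$ and $z_{\max}$ denote its overall minimum and maximum. For any $z < z_{\min}$, every gene satisfies the threshold condition $z$-score $\geq z$, so each associated gene set equals the entire gene population and $|X_i| = |X_i \cap X_j| = N$. The summation in~(\ref{eq1}) (or~(\ref{eq2})) then collapses to the single term $k = N$, whose ratio of binomial coefficients evaluates to $1$. Symmetrically, for any $z > z_{\max}$, no gene meets the threshold, so $X_i = \emptyset$ and only the $k = 0$ term survives, again equal to $1$.

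Second, I would note that the Bonferroni correction multiplies by the fixed (finite) number of pairwise tests and is capped at $1$, so the corrected $p$-value remains $1$. Therefore $a_{ij}(z) = -\log_{10}(1) = 0$ for every $i,j$, and substituting into~(\ref{eq4}) gives $u(z) = \log_{10}(0 + 1) = 0$. The limit $u(z) \to 0$ as $|z| \to \infty$ follows immediately.

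There is no serious obstacle; the argument is essentially combinatorial bookkeeping of the degenerate hypergeometric tail. The only subtlety worth flagging is that, because the gene-by-sample $z$-scores form a finite set, the degeneracy is reached \emph{exactly} (not merely asymptotically) once $|z|$ exceeds $\max(|z_{\min}|,|z_{\max}|)$, so in fact one obtains the stronger conclusion that $u(z) \equiv 0$ outside a bounded interval. This also provides a principled justification for restricting the search for $z^*$ to a compact region such as $[-2,3]$ in the threshold-selection procedure.
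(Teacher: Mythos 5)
Your proposal is correct and follows essentially the same route as the paper's proof: in both limits the associated gene sets degenerate to the full population or the empty set, the hypergeometric $p$-values in Equations~(\ref{eq1}) and~(\ref{eq2}) become $1$, every TROM score vanishes, and hence $u(z) \rightarrow 0$. Your additional observation that the finiteness of the $z$-scores makes the degeneracy exact outside a bounded interval is a mild strengthening of the paper's limiting argument, not a different method.
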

\begin{proof}
\small
Because of the criterion of selecting associated genes: $z$-scores $\ge z$, for within-species comparison between samples $i$ and $j$, whose sets of associated genes are denoted as $X_i$ and $X_j$, we have
\begin{itemize}
\item as $z \rightarrow -\infty$, $|X_i| \rightarrow N$, $|X_j| \rightarrow N$, and $|X_i \cap X_j| \rightarrow N$, where $N$ is the number of all genes of the species;
\item as $z \rightarrow \infty$, $|X_i| \rightarrow 0$, $|X_j| \rightarrow 0$, and $|X_i \cap X_j| \rightarrow 0$.
\end{itemize}
Given the $p$-value formula (Equation~(\ref{eq1})) of the within-species overlap test in TROM, we have
\begin{itemize}
\item as  $|X_i| \rightarrow N$, $|X_j| \rightarrow N$, and $|X_i \cap X_j| \rightarrow N$, $p$-value $\rightarrow 1$;
\item as $|X_i| \rightarrow 0$, $|X_j| \rightarrow 0$, and $|X_i \cap X_j| \rightarrow 0$, $p$-value $\rightarrow 1$.
\end{itemize}

For between-species comparison between samples $i$ from species 1 and sample $j$ from species 2, whose associated genes correspond to ortholog pairs denoted as $X_i$ and $Y_j$, and between $X_i$ and $Y_j$ there are $m_0$ ortholog pairs, we have
\begin{itemize}
\item as $z \rightarrow -\infty$, $|X_i| \rightarrow N$, $|Y_j| \rightarrow N$, and $m_0 \rightarrow N$, where $N$ is the total number of ortholog pairs between the two species;
\item as $z \rightarrow \infty$, $|X_i| \rightarrow 0$, $|Y_j| \rightarrow 0$, and $m_0 \rightarrow 0$.
\end{itemize}
Given the $p$-value formula (Equation~(\ref{eq2})) of the between-species overlap test in TROM, we have
\begin{itemize}
\item as $|X_i| \rightarrow N$, $|Y_j| \rightarrow N$, and $m_0 \rightarrow N$, $p$-value $\rightarrow 1$;
\item as $|X_i| \rightarrow 0$, $|Y_j| \rightarrow 0$, and $m_0 \rightarrow 0$, $p$-value $\rightarrow 1$.
\end{itemize}

So for both within-species and between-species comparisons, we have TROM score $a_{ij}(z) \rightarrow 0$ as $|z| \rightarrow \infty$ given Equation~(\ref{eq3}).

Hence, given the definition of $u(z)$ in Equation~(\ref{eq4}), we have $u(z) \rightarrow 0$ as $|z| \rightarrow \infty$.
\end{proof}

Using this proposed approach, we can easily select a $z$-score threshold for a specified species given its gene expression data. %We list the selected thresholds \noteW{of various species} by this approach in Appendix Table \ref{tab:z_thre}. 
We demonstrate how this approach can select an appropriate threshold for comparing \emph{D. melanogaster} developmental stages by applying it to the RNA-seq data of $m=30$ stages. We consider candidate thresholds in the range of $z \in [-2, 3]$ and calculate TROM matrices for all the candidate values in this range with a step size of $0.1$. The corresponding $u(z)$ is plotted in Figure \ref{zthre}a.

From the density of $u(z)$ (see Figure \ref{zthre}b), we determine that the mode of $u(z)$ is 1.42. By finding the maximum $z$ value such that $u(z)=1.42$, our approach selects $z^* = 0.5$.
Figure \ref{zthre}c shows how different $z$-score thresholds can influence the patterns of correspondence maps. When the threshold is too low (e.g., $-0.4$), many stage pairs are mapped to each other, \JJL{providing} vague information on the relationships of different stages. On the other hand, when the threshold is too high (e.g., $2.0$), \JJL{so} much information is filtered out that most stages are only mapped to themselves, and important correspondence such as the \JJL{similarity} between fly early embryos and female adults is missing \cite{Li2014}. \JJL{Unlike the two extremes}, our selected threshold $0.5$ \JJL{reveals} important correspondence patterns and \JJL{meanwhile yields} a clean correspondence map. 

%\JJL{For all the results in the following sections, the TROM method uses default $z$-score thresholds selected by this approach unless otherwise specified.}

%%%%%%%%%%%%%%%%%%%%%%%%%%%%%%%%%%%%%%%%%%%%%%%%%%%%%%%%%%%%%%%%%%%%%%%%%%%%%%%%%%%%%%%%%%%%%%%%%%
\section{Results}\label{sec:results}
%%%%%%%%%%%%%%%%%%%%%%%%%%%%%%%%%%%%%%%%%%%%%%%%%%%%%%%%%%%%%%%%%%%%%%%%%%%%%%%%%%%%%%%%%%%%%%%%%%

%%%%%%%%%%%%%%%%%%%%%%%%%%%%%%%%%%%%%%%%%%%%%%%%%%%%%%%%%%%%%%%%%%%%%%%
\subsection {Application of TROM to finding correspondence of developmental stages of multiple species}\label{sec:TROM scores}

\JJL{We first demonstrate the use and the performance of TROM in comparative transcriptomics.} We apply TROM to find correspondence patterns of developmental stages of six \textit{Drosophila} (fly) species, \textit{C. elegans} (worm), \textit{S. purpuratus} (sea urchin), \textit{D. rerio} (zebrafish) and \JJL{mouse} liver tissues. The goal is to find similarity of developmental stages within and between species in terms of gene expression dynamics. We use multiple datasets including RNA-seq data of $30$ \textit{D. melanogaster} developmental stages with expression estimates of $15,095$ genes, RNA-seq data of $35$ \emph{C. elegans} stages with $31,622$ genes \cite{gerstein2014comparative,Li2014}, RNA-seq data of $10$ sea urchin stages with $21,090$ genes \cite{tu2014quantitative}, microarray data of six fly species: \emph{D. melanogaster, D. simulans, D. ananassae, D. persimilis, D. pseudoobscura} and \emph{D. virilis} with $9$ to $13$ embryonic stages and $3,663$ genes \cite{Arbeitman2002}, microarray data of mouse liver development with $14$ stages and $45,101$ genes \cite{li2009multi} 
and microarray data of \textit{D. rerio} with $61$ stages and \noteW{$18,259$} genes \cite{domazet2010phylogenetically}. To implement TROM on these gene expression datasets, we select $z$-score thresholds based on the alternative approach described in Section \ref{sec:thre}, and the selected thresholds for various species are summarized in Appendix Table \ref{tab:z_thre} and used throughout this paper unless otherwise specified. A detailed description of these datasets is given in Appendix Table \ref{stage_label}.

In the comparison of developmental stages within each species, the TROM method finds \JJL{block} diagonal correspondence patterns as expected. That is, in every species, adjacent developmental stages close to each other in the time order have high TROM scores. We illustrate the correspondence maps of developmental stages of mouse liver (Figure~\ref{TROM_map}a), sea urchin (Figure~\ref{TROM_map}b) 
%\textit{D. melanogaster} (based on microarray data, Figure~\ref{TROM_map}C), 
and the six \textit{Drosophila} species (Appendix Figure \ref{TROM_scores}). These results provide strong support to the efficacy and validity of TROM in finding transcriptomic similarity of biological samples, in addition to our previous results on \JJL{the} correspondence of \textit{D. melanogaster} and \textit{C. elegans} stages based on RNA-seq data \cite{Li2014}, \JJL{to which} we applied the preliminary idea of TROM.

\begin{figure}[!tb]%figure1
\centering
\includegraphics[width=0.75\linewidth]{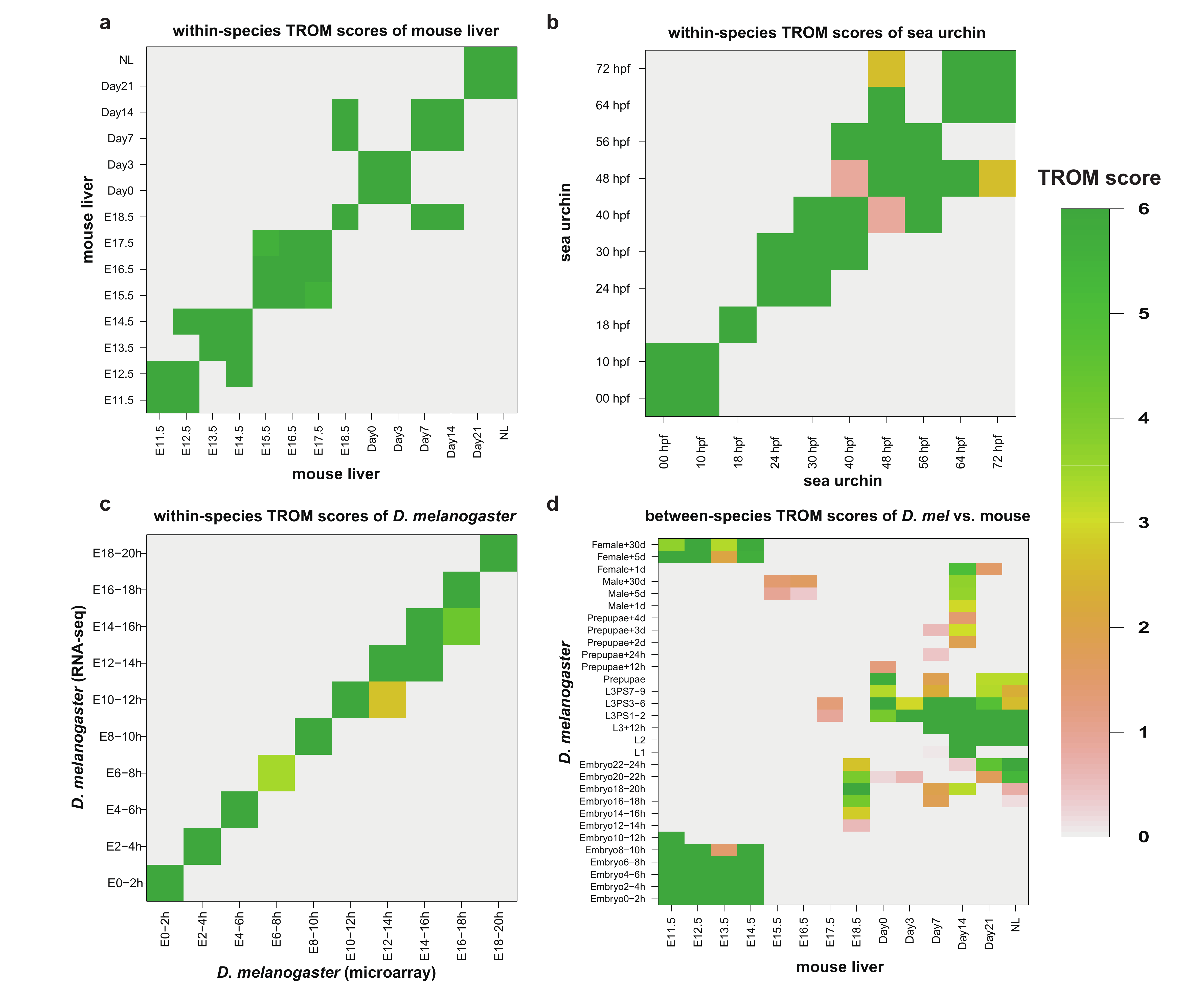}
\caption{ Within-species and between-species correspondence maps of TROM scores. For better illustration, TROM scores are saturated at $6$: all the scores larger than $6$ are set to $6$. \textbf{a}: pairwise within-species TROM scores calculated for the $14$ stages of mouse liver; \textbf{b}: pairwise within-species TROM scores calculated for the $10$ stages of sea urchin; \textbf{c}: pairwise within-species TROM scores calculated for the $10$ stages of \emph{D. melanogaster}. The column stages \JJL{are from} the microarray data, \JJL{and} the row stages \JJL{are from} the RNA-seq data; \textbf{d}: pairwise between-species TROM scores of \emph{D. melanogaster} vs. mouse liver. The columns represent the $14$ mouse stages, \JJL{and} the rows represent the $30$ fly stages.}
\label{TROM_map}
\end{figure}

We also apply TROM to compare the developmental stages of two different species. \noteW{We use ortholog information downloaded from Ensembl \cite{cunningham2015ensembl} in the comparison}. Since fly, worm and mouse are vastly distant from each other in evolution, any correspondence between their developmental stages revealed by TROM will be interesting and may imply conserved developmental programs. Between \textit{D. melanogaster} life cycle and mouse liver development (Figure~\ref{TROM_map}d), TROM finds unknown correspondence between fly early embryos and \JJL{mouse embryo liver tissues}, and between fly female adults and \JJL{mouse embryo liver tissues}. A \JJL{main} reason for the latter correspondence is the transcriptomic similarity of fly early embryos and female adults due to the expression of maternal effect genes \cite{Li2014}. Additionally, there is some irregular correspondence between fly larvae and liver \JJL{tissues of} born mice. We can see a clear separation of \JJL{the liver tissues of} mouse embryos and born mice, \JJL{and their} corresponding fly stages also exhibit a separation of embryos and female adults from other stages. These results \JJL{indicate} that even for vastly different species such as fly and mouse, there is good conservation in their embryonic development. Similarly between the six \textit{Drosophila} species' embryonic development and mouse liver development, we also see good correspondence of fly early embryos and \JJL{mouse embryo liver tissues}, and correspondence between fly late embryos and \JJL{mouse adult liver tissues} (Appendix Figure \ref{TROM_scores}). Moreover, \JJL{mouse embryo liver tissues are} observed to correspond well with worm embryos, \JJL{and this is consistent with the observed correspondence between fly embryos and worm embryos (Appendix Figure \ref{TROM_scores})}.  These consistent correspondence patterns together validate the efficacy of the TROM approach.

Between the six \textit{Drosophila} species, since they are known to have similar developmental programs \cite{Arbeitman2002}, comparisons of their developmental stages \JJL{resemble} within-species comparisons, and \JJL{block} diagonal correspondence patterns \JJL{are} expected. Our results confirm this\noteW{: d}iagonal patterns are observed between the developmental stages of every two fly species (Appendix Figure \ref{TROM_scores}). These results again demonstrate the validity of TROM.

%%%%%%%%%%%%%%%%%%%%%%%%%%%%%%%%%%%%%%%%%%%%%%%%%%%%%%%%%%%%%
\subsection{Comparison of TROM and Pearson/Spearman correlation measures}\label{sec:power_simu}
%%%%%%%%%%%%%%%%%%%%%%%%%%%%%%%%%%%%%%%%%%%%%%%%%%%%%%%%%%%%%
We next describe the scenarios where TROM serves as a better similarity measure than Pearson/Spearman correlation measures \JL{in differentiating the stage pairs, which exhibit high dependence in highly expressed genes, from other stage pairs.} A key difference between our TROM method and the Pearson/Spearman correlation analysis is that TROM divides genes into two sets (associated genes and non-associated genes) for every sample based on gene expression dynamics across all samples. After the division, calculation of TROM scores does not rely on actual gene expression measurements. \JJL{Henceforth, TROM defines \JL{sample similarity} based on the overlap of their associated genes.} In contrast to TROM, Pearson and Spearman correlations are calculated based on actual expression measurements of the same set of genes in two samples. Hence, they are more sensitive to expression fluctuations of lowly expressed genes due to measurement errors, and their values can be driven high by \JL{the genes (e.g., housekeeping genes)} that have approximately constant expression across samples and carry little information on sample characteristics. For our goal of constructing a \JL{sparse sample correspondence map based on gene expression}, Pearson and Spearman correlation measures are \JL{often unsatisfactory, as they give} rise to noisy correspondence maps (Appendix Figures \ref{outline} and \ref{cor_associated}).

To demonstrate the power of TROM in detecting the correspondence of biological samples that share transcriptomic characteristics embedded in \noteJJL{highly expressed genes}, we conduct a simulation study to compare TROM \JL{with Pearson and Spearman correlation measures. 
Specifically, we consider their values as classification scores to differentiate the sample pairs with strong dependence in highly expressed genes from the rest sample pairs.} We evaluate their performance in terms of classification accuracy.

Suppose a species of interest has a total number of $N$ genes and $m$ samples. For the observed data, let $\bm X_j = (X_{1j}, \ldots, \allowbreak X_{Nj})^T$ denote the expression vector of the $N$ genes in sample $j$. For the underlying (hidden) sample similarity, we use a state matrix $\bm E_{m\times m}$ to denote the pairwise relationships between the $m$ samples. That is, if samples $i$ and $j$ have high dependence in their associated genes, $E_{ij} = 1$; otherwise $E_{ij} = 0$. We consider how to predict $E_{ij}$ for every pair $1 \le i \neq j \le m$ from gene expression matrix as a classification problem. We would like to compare the three measures in this setting and evaluate their performance as classification scores using precision recall curves, receiver operating characteristic (ROC) curves, and Neyman-Pearson
ROC curves \cite{nproc}.

In this simulation, we define the state matrix  $\bm E_{m\times m}$ based on a correlation matrix of associated genes. Specifically, \JL{in the example of comparing developmental stages,} we assume a Toeplitz-type correlation matrix $\bm\Sigma$ where $\Sigma_{ij} = \rho^{|i-j|}\ (i,j=1,2,\dots,m;~ \rho\in [0,1])$, \JL{which is reasonable as it assigns a higher correlation to more adjacent stage pairs. To reduce arbitrariness in defining $\bm E$ based on $\bm\Sigma$, we vary a threshold $c \in (0,1)$ and define $\bm E$ as
\begin{eqnarray}
E_{ij} = 
\left\{
\begin{array}{ll}
1 & \text{if}\ \Sigma_{ij} > c\\
0 & \text{if}\ \Sigma_{ij}\leq c\\
\end{array}\right..
\label{defineE}
\end{eqnarray}
We would like to track how the classification accuracy of the three measures changes as the parameter $c$ changes. 
}

We use the following generative model to simulate gene expression matrices. We let $\bm I_{N\times m}$ be \JL{an indicator matrix, with $I_{ij} = 1$ if gene $i$ is an associated gene of sample $j$ and $I_{ij} = 0$ otherwise}. Given the correlation matrix $\bm\Sigma$, we assume that the $i$th row $\bm I_i \in \{0,1\}^m$ is a binary vector randomly sampled from a multivariate Bernoulli distribution with expectation $q\times \bm 1_{m\times 1}$ (\JL{$q \in (0,1)$} inferred from real data) and correlation matrix $\bm\Sigma$. Given \JL{the associated-gene indicator matrix} $\bm I_{N\times m}$, we generate a gene expression matrix in a data-driven approach, because gene expression values in real data contain noises and cannot be easily described by any common probability distributions. \JL{We first scale a real gene expression matrix $\bm Y_{N\times m}$ by dividing each of its rows by the row maximal values, denoted by $\bm Y^{\text{scale}}$.} Then for each gene $i=1,2,\dots,N$, we locate its closest counterpart in real data by searching for gene $i'$ in $\bm Y^{\text{scale}}$ such that \JL{the $i$th row} $\bm Y_{i'}^{\text{scale}}$
and $\bm I_i$ has the minimal Euclidean distance. Given $Y_{i'}$, the $i'$th row of $\bm Y$, we define sets $A_{i'} = \left\{ Y_{i'j}: \text{gene $i'$ is an associated gene in sample $j$, } j=1,\ldots,m \right\}$ and $A_{i'}^c = \left\{ Y_{i'j}: \text{gene $i'$ is not an associated gene in sample $j$, } j=1,\ldots,m \right\}$ to collect the expression values of gene $i'$ when it is identified as associated or not associated with real-data samples, based on a pre-determined $z$-score threshold $z^*$. Finally, we create a gene expression matrix $\bm{X}_{N \times m}$ as follows: for gene $i$ in sample $j$, if $I_{ij} = 1$, we randomly sample the value of $X_{ij}$ from $A_{i'}$; if $I_{ij} = 0$, we randomly sample the value of $X_{ij}$ from $A_{i'}^c$.

Using this generative model, we simulate $K = 200$ gene expression matrices of the same species. We denote the matrices as $\bm X^{(k)}$, $k=1,\ldots,K$. Then we calculate the \JL{similarity score} matrices based on the three similarity measures. \JL{For TROM, to} determine the associated genes and non-associated genes of each sample, we calculate the \JL{$z$-score} threshold based on $\bm X^{(k)}$ using the method introduced in Section \ref{sec:thre}. The resulting TROM score \JL{matrix is} denoted as $\bm T^{(k)}$. The Pearson and Spearman correlation matrices are denoted as $\bm P^{(k)}$ and $\bm S^{(k)}$, respectively. Please note that $\bm{T}^{(k)}$, $\bm{P}^{(k)}$ and $\bm{S}^{(k)}$ are all $m \times m$ matrices, with the same dimensions as $\bm{E}$.

\JL{To perform classification based on the score matrices of the three measures, we apply multiple cutoffs to the matrices and calculate the resulting precision and recall rates. For example, if we use $c_T$ as the cutoff for TROM scores, for $k= 1,2,\dots,K$ we have predicted class labels}
\begin{eqnarray*}
\hat E^{(k)}_{ij} = 
\left\{
\begin{array}{ll}
1 & \text{if}\ T^{(k)}_{ij} > c_T\\
0 & \text{if}\ T^{(k)}_{ij}\leq c_T\\
\end{array}\right..
\end{eqnarray*}
The precision and recall \JL{rates} of TROM in \JL{the $k$th run are then} calculated as
\begin{eqnarray*}
\text{precision}^{(k)} &=& \frac{ \underset{i\neq j}{\sum\sum} \hat E^{(k)}_{ij}E_{ij}}{\underset{i\neq j}{\sum\sum}  \hat E^{(k)}_{ij}}~,\\
\text{recall}^{(k)} &=& \frac{ \underset{i\neq j}{\sum\sum} \hat E^{(k)}_{ij}E_{ij}}{\underset{i\neq j}{\sum\sum}  E_{ij}}~.\\
\end{eqnarray*}
Similarly, we can calculate the precision and recall \JL{rates} of Pearson/Spearman correlation by applying \JL{varying} cutoffs on $\bm P^{(k)}$ and  $\bm S^{(k)}$ respectively.

\begin{figure}[!tpb]%figure1
\centering
\includegraphics[width=\linewidth]{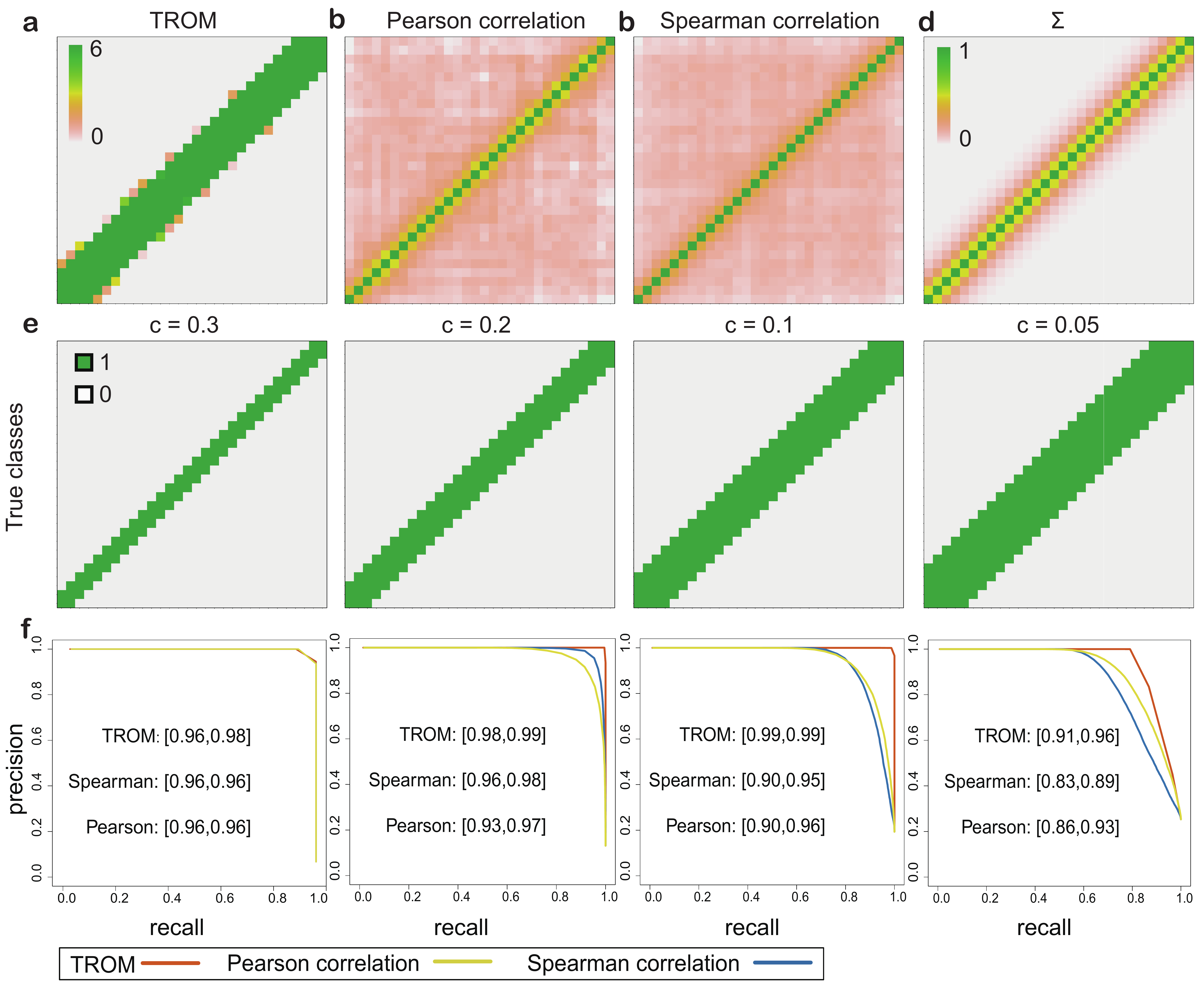}
\caption{Comparison of TROM and Pearson/Spearman correlation on simulated \emph{D. melanogaster} (fly) data. \JL{\textbf{a-c}: The correspondence maps produced by TROM (\textbf{a}), Pearson correlation (\textbf{b}) and Spearman correlation (\textbf{c}) on a randomly selected gene expression matrix (among the $K = 200$ matrices). \textbf{d}: The correlation matrix $\bm \Sigma$ that defines the dependence of associated genes between samples. \textbf{e}: The true sample relationships ($1$: high dependence in associated genes; $0$: otherwise)  defined as in Equation \ref{defineE} for varying $c$. \textbf{f}: The mean precision-recall curves on the $200$ gene expression matrices, given the true labels in \textbf{e}. The 95\% confidence intervals of each measure's area under the curve (AUC) are marked next to the curves.}}
\label{simu_fly}
\end{figure}

We carry out \JL{this} simulation study \JL{in the context of} \emph{D. melonagaster} (fly) and \emph{C. elegans} (worm). For fly, we \JL{have} $N = 10,000$, $m = 30$, $q = 0.15$, $z^* = 0.5$; for worm, we \JL{have} $N = 10,000$, $m = 35$, $q = 0.2$, $z^* = 0.6$. In both cases, we set $\rho = 0.5$ and let $c$ take four different values: 0.3, 0.2, 0.1, 0.05. The real data used to generate the simulated gene expression \JL{matrices} are processed from modENCODE RNA-seq data of $30$ fly developmental stages and $35$ worm developmental stages \cite{gerstein2014comparative,Li2014}. The precision-recall curves of the three measures are illustrated in Figures \ref{simu_fly} and \ref{simu_worm}. \JL{In both cases, we see that TROM produces clearer sparse patterns of sample similarity (Figure \ref{simu_fly}a vs. b-c and Figure \ref{simu_worm}a vs. b-c), and for predicting stage-pair labels defined by different threshold $c$ values, TROM always has the largest area under the precision-recall curves (Figures \ref{simu_fly}e-f and \ref{simu_worm}e-f, in terms of both the mean area and the 95\% confidence intervals from the $K=200$ simulation runs).} We also calculate Receiver Operating Characteristic (ROC) and the Neyman-Pearson Receiver Operating Characteristic (NP-ROC \cite{nproc}) curves of the three measures in each case (see Appendix Figure \ref{simu_roc}), and \JL{TROM still has the best classification accuracy}.

\JL{In this classification setting, TROM scores, Pearson correlations and Spearman correlations are essentially three ways of transforming a gene expression matrix into features of sample pairs. The above simulation results suggest that TROM scores serve as better features for this task, that is, to capture the sparse similarity relationships of samples. The main reason is that TROM scores are based on gene expression levels of all samples, while Pearson and Spearman correlations only capture the similarity of gene expression profiles for every pair of samples.}
%From the perspective of classifying sample pairs into mapped and unmapped classes, if we consider TROM scores, Pearson correlations, and Spearman correlations as three ways of transforming gene expression matrix into features, the comparison on precision and recall suggests that TROM scores serve as the better feature for this task. The reason is that TROM scores are based on gene expression levels across all samples, while Pearson and Spearman correlations only capture the similarity of gene expression profiles for every sample pair.

\begin{figure}[!tpb]%figure1
\centering
\includegraphics[width=\linewidth]{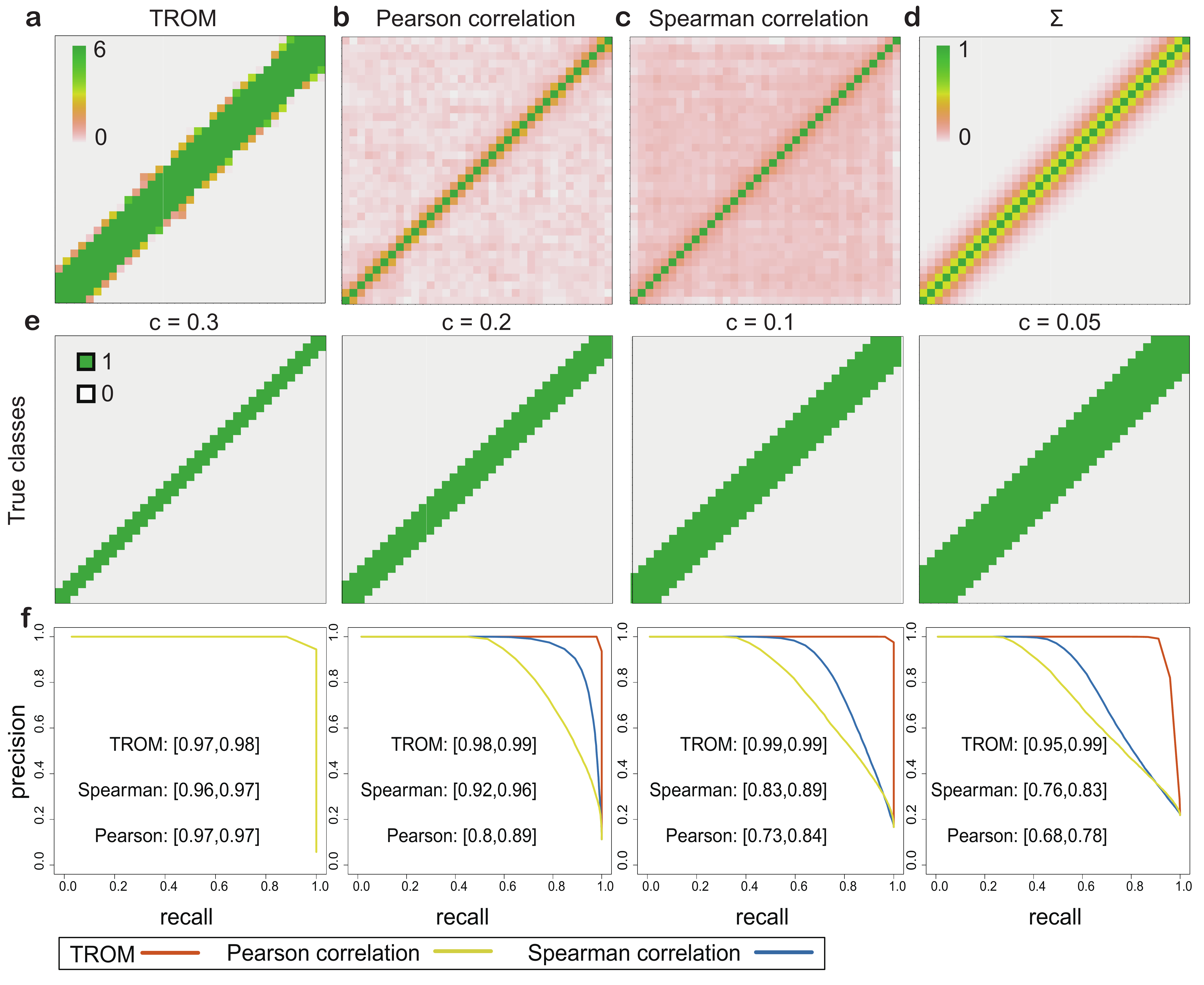}
\caption{ Comparison of TROM and Pearson/Spearman correlation on simulated \emph{C. elegans} (worm) data. \JL{\textbf{a-c}: The correspondence maps produced by TROM (\textbf{a}), Pearson correlation (\textbf{b}) and Spearman correlation (\textbf{c}) on a randomly selected gene expression matrix (among the $200$ matrices). \textbf{d}: The correlation matrix $\bm \Sigma$ that defines the dependence of associated genes between samples. \textbf{e}: The true sample relationships ($1$: high dependence in associated genes; $0$: otherwise)  defined as in Equation \ref{defineE} for varying $c$. \textbf{f}: The mean precision-recall curves on the $200$ gene expression matrices, given the true labels in \textbf{e}. The 95\% confidence intervals of each measure's area under the curve (AUC) are marked next to the curves.}}
\label{simu_worm}
\end{figure}

\JL{In addition, we} directly compare TROM with Pearson and Spearman correlation coefficients on the two real datasets of fly and worm used in the simulation. In our previous work \cite{Li2014}, we applied the preliminary idea of TROM to \JJL{compare} the developmental stages within each species and between the two species, and found interesting correspondence patterns: a \JJL{block} diagonal pattern for within-species comparison and two parallel patterns \JJL{between fly and worm developmental stages}. When using Pearson and Spearman correlations on the same data to compare these stages, however, \JL{we find that neither correlation measure leads to clear correspondence patterns} in the between-species comparison (Appendix Figure \ref{outline}). In the within-species comparison, Spearman correlation finds a vague diagonal pattern, while Pearson correlation leads to \JJL{an} unreasonable \JJL{checkerboard} pattern.
\noteW{
We also calculate Pearson and Spearman correlation matrices based on the union of all \JL{the} stage-associated genes \JL{found by TROM}. \JJL{However, correlation methods} still cannot provide \JL{clear} correspondence maps  \JL{like} \JJL{TROM} \JJL{does} (Appendix Figure \ref{cor_associated}).
}

%%%%%%%%%%%%%%%%%%%%%%%%%%%%%%%%%%%%%%%%%%%%%%%%%%%%%%%%%%%%%%%%%%%%%%%
\subsection {\noteW{Robustness of TROM to data normalization}}
%%%%%%%%%%%%%%%%%%%%%%%%%%%%%%%%%%%%%%%%%%%%%%%%%%%%%%%%%%%%%%%%%%%%%%%

Since quantile normalization has \JL{been suggested as} an essential step in \JL{many} analysis pipelines for high-throughput data such as microarray and RNA-seq \JJL{data} \cite{bolstad2003comparison,quantile2014}, we conduct a simulation study to demonstrate the influence of quantile normalization on TROM scores. We simulate $200$ gene expression matrices and compute their TROM scores \JL{with or without quantile normalization as a preceding step}. Then we \JL{test if} the distribution of TROM scores \JJL{changes with the use of} quantile normalization. 

We use the same procedure as \JL{what} described in Section \ref{sec:power_simu} to generate \JL{$200$} gene expression matrices based on \JL{the} modENCODE RNA-seq data of $35$ worm developmental stages. By applying the TROM method \JL{to these} gene expression \JL{matrices} before or after quantile normalization, we obtain two sets of TROM matrices $\bm T^{(0k)}$ and $\bm T^{(1k)}, k=1,2,\dots,200$. For each pair of samples, \JL{say samples $i$ and $j$}, we have two sets of TROM scores $T^{(0k)}_{ij}$ and $T^{(1k)}_{ij}$. We then use the Wilcoxon signed-rank test \JL{and separately the} paired Student's $t$ test to check whether the TROM scores change significantly before and after quantile normalization. We consider the change as significant if the Bonferroni-corrected $p$-value is smaller than $0.05$. The results are \JL{shown} in Figure \ref{simu_quant}.

\begin{figure}[!tpb]%figure1
\centering
\includegraphics[width=0.7\linewidth]{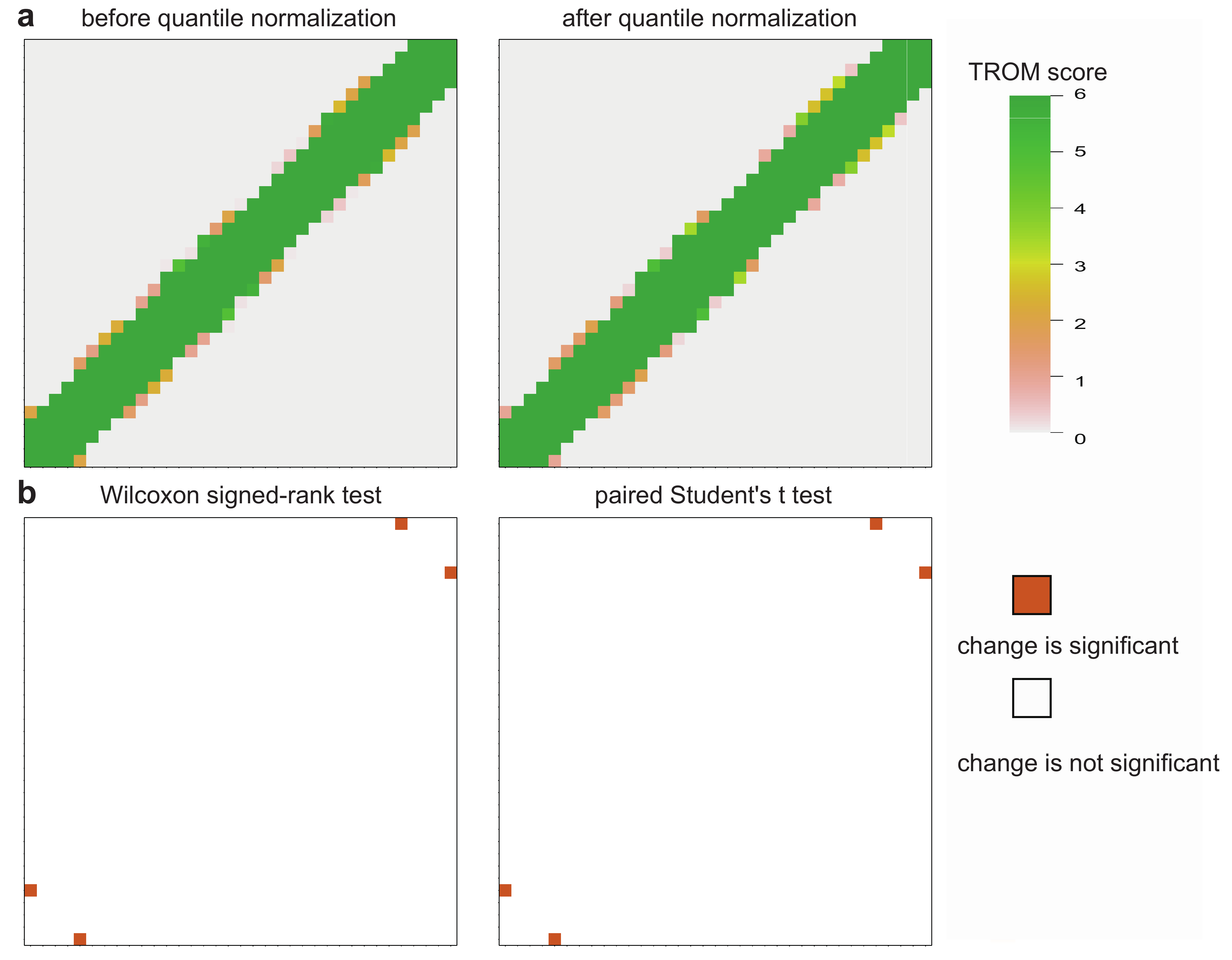}
\caption{Robustness of TROM to \JL{quantile} normalization on simulated \emph{C. elegans} (worm) data. \textbf{a}: The correspondence maps based on \JL{TROM scores of} a randomly selected \JL{gene expression} matrix (among the $200$ \JL{simulated matrices}), before (left) and after (right) quantile normalization.  \textbf{b}: The results of \JL{the Wilcoxon signed-rank test (left) and the paired Student's t test (right)}. \JL{Every blank cell means that the Bonferroni-corrected $p$-value is insignificant for the corresponding pair of stages, i.e., the TROM scores do not change significantly after quantile normalization.}}
\label{simu_quant}
\end{figure}

The results of both tests suggest that TROM is robust to \JL{unnormalized data}, and the correspondence patterns \JL{resulted from TROM scores do not change significantly after quantile normalization}. Even in the two rare cases where the $p$-values are significant \JL{(Figure \ref{simu_quant}b), the corresponding samples are consistently mapped} before and after normalization. We also \JL{try to replace the} gene expression data with \JL{their} normalized version in Section \ref{sec:power_simu}, and the confidence intervals of TROM's \JL{area under the curve (AUC)} remain the same. This result implies that the classification power of TROM is \JL{also} robust to data normalization.

%%%%%%%%%%%%%%%%%%%%%%%%%%%%%%%%%%%%%%%%%%%%%%%%%%%%%%%%%%%%%%%%%%%%%%%
\subsection {\noteJJLnew{Robustness of TROM to different platforms: comparison of \textit{D. melanogaster} developmental stages based on microarray and RNA-seq data}}
%%%%%%%%%%%%%%%%%%%%%%%%%%%%%%%%%%%%%%%%%%%%%%%%%%%%%%%%%%%%%%%%%%%%%%%
Although many studies have claimed that RNA-seq is the technique of choice that provides more accurate estimation of absolute gene expression levels compared with microarray \cite{zhao2014comparison,fu2009estimating}, several genome-wide analyses have also suggested that microarray can measure the expression of above-median expressed genes reasonably well, and on those genes the two platforms have good concordance \cite{wang2014concordance}. Since microarray has been widely used to study transcriptomes of multiple species under various conditions in the past decade, it is desirable to have a good comparative transcriptomic method that is robust to the platform difference of microarray and RNA-seq data. 

Here we demonstrate the robustness of TROM by applying it to \JJL{comparing} the microarray and RNA-seq data \JJL{of} the developmental stages of \textit{D. melanogaster}. If TROM is robust, it should identify strong correspondence between  similar developmental stages in the microarray and RNA-seq data. 
%of compared the two platforms using the data of the \noteW{six \emph{Drosophila}} species and studied whether the TROM scores can distinguish between different stages. 
For a pair of developmental stages, one with microarray data and the other with RNA-seq data, TROM identifies a set of associated genes for each of them based on all the stages \JJL{with} microarray \JJL{and} RNA-seq data, respectively. Then TROM performs the overlap test and produces a correspondence map.
%The TROM scores calculated using \emph{D.melanogaster}'s data are illustrated in Figure~\ref{TROM_map}C while the TROM scores calculated using the RNA-seq data of \emph{D.melanogaster} and the microarray data of the other five fly species are in Figure~\ref{micro_vs_rna}. 
The results show that TROM can find almost perfect correspondence of the same \textit{D. melanogaster} embryonic stages between microarray or RNA-seq (Figure~\ref{TROM_map}c).
\JJL{There are five other \textit{Drosophila} species that} have similar developmental patterns as \textit{D. melanogaster}, as we have already shown in the within-species and between-species comparison in Section \ref{sec:TROM scores}. We also compare their microarray data \JJL{of} embryonic stages with the RNA-seq data of \textit{D. melanogaster} as a further check. In the result (Appendix Figure \ref{micro_vs_rna}), we observe strong \JJL{block} diagonal patterns. Although RNA-seq data \JL{contain} larvae, prepupae, and adult stages that do not have \JJL{corresponding} microarray data, the off-diagonal patterns, which we observe (1) between late embryos in microarray and prepupae in RNA-seq and (2) between early embryos in microarray and female adults in RNA-seq, are consistent with our previous within-species correspondence map based on RNA-seq data only \cite{Li2014} and previous studies \cite{Arbeitman2002}. These results show that TROM can find almost the same correspondence of \textit{Drosophila} developmental stages \JJL{regardless} of the platform being microarray or RNA-seq.

%%%%%%%%%%%%%%%%%%%%%%%%%%%%%%%%%%%%%%%%%%%%%%%%%%%%%%%%%%%%%%%%%%%%%%%
\subsection{Gene Ontology (GO) enrichment analysis}
%%%%%%%%%%%%%%%%%%%%%%%%%%%%%%%%%%%%%%%%%%%%%%%%%%%%%%%%%%%%%%%%%%%%%%%
To understand the biological functions behind the correspondence we \JJL{have observed} between developmental stages, we perform enrichment analysis \cite{ashburner2000gene} of biological process (BP) gene ontology (GO) terms in stage-associated genes, as a way to determine common biological functions and processes in corresponding stages.
First, we examine the GO term enrichment in the associated genes of every \emph{D. melanogaster} embryomnic stage, using RNA-seq data (with $z$-score threshold 1.5) and microarray data (with $z$-score threshold 0.5), respectively. The enrichment scores are defined as $-\log_{10}(\text{Bonferroni corrected}\ p\text{-value})$ where $p$-values are calculated based on the hypergeometric test, and the results are illustrated in Appendix Figures \ref{fly_GO_rna} \JL{and}  \ref{fly_GO_micro}. \noteJJLnew{For every fly embryonic stage, the} top 20 enriched GO terms \noteJJLnew{in the associated genes} identified \noteJJLnew{by} RNA-seq data \noteJJLnew{contain biological functions highly relevant to these stages, and many of these terms have been discovered as enriched in \noteJJLnew{relevant embryonic samples} by previous studies} \cite{Li2014,puniyani2010spex2}. A proportion of these top enrichment GO terms with \noteJJLnew{support in the literature} are listed in Table \ref{flyGO}. \noteJJLnew{The enriched GO terms identified from both RNA-seq and microarray data} support the \noteJJLnew{correspondence} patterns observed in TROM correspondence maps: common enriched GO terms are often shared by adjacent stages \JJL{whose pairwise TROM scores are high}. The top enriched GO terms \noteJJLnew{found by both microarray and RNA-seq are informative for further functional studies on the associated genes of every stage, so as to better understand embryonic development of \emph{D. melanogaster}}.
\begin{table}[t]
\caption{Selected enriched GO terms in each stage of \emph{D. melanogaster}.}{%
\resizebox{\columnwidth}{!}{%
\begin{tabular}{ll}
\hline
\noteJJL{Stage Name} & Top enriched GO terms  \\
\hline
Embryo 0-2h & oogenesis, DNA replication, germ cell development, neurogenesis\\
Embryo 2-4h & neurogenesis, mRNA splicing via spliceosome, zygotic determination of anterior/posterior axis\\
Embryo 4-6h & mRNA splicing via spliceosome, specification of segmental identity, cell fate specification\\
Embryo 6-8h & cell fate specification, sensory organ development, open tracheal system development\\
Embryo 8-10h & myoblast fusion, multicellular organism reproduction, puparial adhesion\\
Embryo 10-12h & myoblast fusion, translation, mitotic spindle elongation, septate junction assembly\\
Embryo 12-14h & axon guidance, septate junction assembly, branch fusion open tracheal system\\
Embryo 14-16h & circadian rhythm, response to light stimulus, crystal cell differentiation\\
Embryo 16-18h & chitin-based cuticle development, body morphogenesis, chitin metabolic process\\
Embryo 18-20h & body morphogenesis, chitin metabolic process, proteolysis\\
\hline
\end{tabular}
}
}
\label{flyGO}
\end{table}

\noteJJLnew{We also examine the GO term enrichment in the associated genes (identified with $z$-score threshold 1.5) of every developmental stage of mouse liver}. The resulting enrichment scores are illustrated in Appendix Figure \ref{mouse_GO}. The \noteJJLnew{top 10 enriched} GO terms in our selected stage-associated genes \noteJJLnew{of every stage} \noteJJLnew{confirm} previous findings on liver development and regeneration. In E11.5-12.5, \noteJJLnew{two of the early stages}, top enriched GO terms are \noteJJLnew{mostly} cell cycle related terms like \noteJJLnew{``translation," ``mRNA processing," ``cell cycle," and ``cell division"} \cite{li2009multi}. Previous research has shown that mouse liver takes over the function of hematopoiesis at E10.5-12.5 \cite{li2009multi,hata2007liver}\noteJJLnew{,} and we found that the GO terms \noteJJLnew{including ``heme biosynthetic process" and ``porphyrin-containing compound biosynthetic process"} are top enriched in subsequent stages. For \noteJJLnew{stages} E17.5-Day7, the GO terms \noteJJLnew{``innate immune response" and ``immune system process"} are top enriched, in accordance with the theory that liver is an organ with innate immune features \cite{dong2007roles}. Finally, as the function of mouse liver \noteJJLnew{switches} from hematopoiesis to metabolism and this capacity dominates in the adult liver \cite{hata2007liver,li2009multi}, we \noteJJLnew{observe} that GO terms related to various metabolic processes \noteJJLnew{become} enriched in \noteJJLnew{stages} E17.5-NL(normal adult liver tissue). These findings again illustrate the \noteJJLnew{capacity} of the associated genes in capturing \noteJJLnew{transcriptomic characteristics of biological samples}.

\subsection{Construction of developmental trees using stage-associated genes}
We further demonstrate that the selected stage-associated genes contain abundant information to group and distinct developmental stages. Tree construction has been a popular approach for studying the relationships of different developmental stages in organism development \cite{Arbeitman2002} as well as cell lineages in cell differentiation \cite{virmani2002hierarchical}. Here we attempt to construct developmental trees of diverse species (see Figure \ref{tree} and Appendix Figure \ref{tree_supp}) based on the identified associated genes of each developmental stage, reasoning that the associated genes capture stage characteristics and thus can lead to reasonable developmental trees. 
\noteW{
In tree construction, both Simpson and Jacard similarity coefficients can be used to measure the distance between the associated genes of different samples. However, Simpson coefficient will produce a result of $1$ when \JJL{the associated genes of one sample} is a subset of the \JJL{associated genes of the other sample, and it} thus fails to distinguish two samples in this case. In contrast, Jacard coefficient is able to separate two biological samples in this case\JJL{, because it} considers two samples as identical if and only if they have exactly the same associated genes.
}
As a consequence, we carry out the tree construction by hierarchical clustering, using average linkage and Jaccard coefficient, where the distance between two stages $i$ and $j$ is calculated as
\begin{equation}
J_{ij} = \frac{|X_i\cap X_j|}{|X_i| + |X_j| - |X_i\cap X_j|}\,,
\end{equation}
where $|X_i|$ \noteJJLnew{and} $|X_j|$ are the \noteJJLnew{sizes} of two sets of stage-associated genes and $|X_i\cap X_j|$ is the number of \noteJJLnew{genes in their} intersection.

\begin{figure}[!tpb]%figure1
\centering
\includegraphics[width=0.8\linewidth]{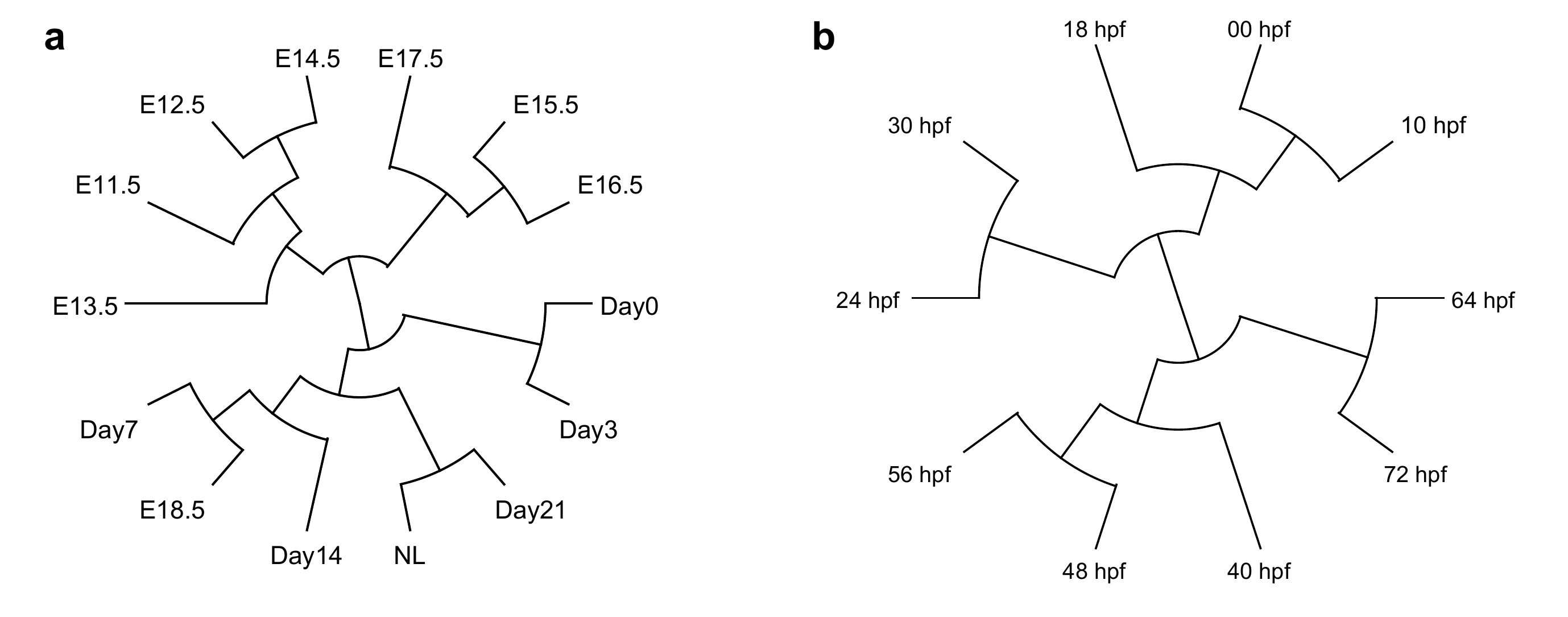}
\caption{\JJL{Developmental trees constructed based on} stage-associated genes (identified with $z$-score thresholds $1.4$ and $1.1$ for mouse liver and sea urchin respectively). \textbf{a}: \JJL{Developmental tree of} mouse liver. \textbf{b}: \JJL{Developmental tree of} sea urchin.}
\label{tree}
\end{figure}

The \noteJJLnew{developmental} tree (see Figure \ref{tree}a) constructed for mouse liver development shows an interesting pattern: the first major \noteJJLnew{branch} of the tree \noteJJLnew{successfully} divides the $14$ stages into embryonic stages and postnatal stages with one exception that the last embryonic stage E18.5 is clustered \noteJJLnew{with} the postnatal stages. \noteJJLnew{Moreover, neighboring stages are clustered with each other in small branches.} %In addition, smaller branches are usually consisted of neighboring stages. 
These observations are in accordance with the correspondence pattern illustrated by TROM scores (see Figure \ref{TROM_map}a): mappings exist between neighboring stages but not between E11.5-E17.5 and E18.5-NL. \noteJJLnew{Previous} hierarchical clustering results on genes whose expression levels are changed by more than 1.5-fold to average \cite{li2009multi} \noteJJLnew{supported our constructed tree and the} similarity between E18.5 and postnatal stages. The GO enrichment analysis \noteJJLnew{provides functional explanation on the observed clustering of E18.5 and Day 7, which both have enriched GO term including ``innate immune response," ``immune system process," and ``multicellular organismal development".}

The \noteJJLnew{developmental} tree (see Figure \ref{tree}b) constructed for sea urchin embryonic development also matches existent understanding of temporal interrelations of developmental \noteJJLnew{stages}. First, the major \noteJJLnew{branch} of the differentiation tree divides the stages into two sub-groups: one is 00, 10, 18, 24 \noteJJLnew{and} 30 hpf and the other is 40, 48, 56, 64 \noteJJLnew{and} 72 hpf. Previous studies show that oral/aboral (O/A) axis specification, endomesoderm development and autonomous specification are the major developmental processes before 40 hpf\noteJJLnew{,} while set-aside cells and rudiment formation and embryonic morphogenesis take over the major processes after 40 hpf \cite{davidson1998specific}. This functional explanation supports our constructed tree. Second, neighboring stages are grouped into small branches, and the overall tree is in accordance with \noteJJLnew{sea urchin's embryonic development periods} as cleavage, blastula, gastrula and prism-pluteus \cite{davidson1998specific}. 

We also observe reasonable and meaningful developmental trees constructed for the six \textit{Drosophila} species and \textit{C. elegans} (Appendix Fig. \ref{tree_supp}). We note that the tree construction is robust to the $z$-score threshold choices.

\section{Discussion}
\noteW{
In this work, we demonstrate that our proposed measure TROM is more efficient in finding transcriptomic similarity and correspondence patterns \JJL{of biological samples within and between} species compared with \JJL{Pearson and Spearman correlations}. Both simulation and real data analysis verify the superior power of TROM in detecting biologically meaningful relationships between different samples. The comparison results suggest that \JJL{in the TROM method} the selection of associated genes is a \JJL{critical} step \JJL{before} the overlap test. The selection step ensures that the transcriptomic characteristics of each sample are \JJL{well} captured and represented. \JJL{Moreover, the strength of TROM also lies} in the overlap test \JJL{that} does not directly rely on absolute gene expression values and is thus relatively robust to noisy data. On the other hand, \JJL{Pearson and Spearman correlations} fail to \JJL{detect clear} correspondence patterns even \JJL{based on the} associated genes. 
}

\noteW{
We \JJL{observe} that it is possible to improve the \JJL{correspondence map found by} Spearman correlation by \JJL{thresholding its correlation values, i.e., setting all the values below the threshold to the minimum value of all pairwise comparisons}. We test this procedure on the RNA-seq datasets of \emph{D. melanogaster} and \emph{C. elegans} and the results are summarized in Appendix Figure \ref{sp_thre}. As expected, thresholding on the Spearman correlation can give rise to relatively clearer correspondence patterns. However, this procedure is very sensitive to the threshold and often miss biologically meaningful mappings: the similarity of early embryos and female adults in fly is only captured once and the similarity of embryo and adults in worm is totally \JJL{missing at all thresholds} \cite{Li2014}.}

\noteW{
We would also like to point out that although TROM is not a parameter-free method, the resulting similarity patterns are largely robust to the selection of \JJL{the $z$-score threshold}. In addition, \JJL{the TROM method} provides users with the flexibility to tune the \JJL{threshold} according to \JJL{the level of} relationships they \JJL{look} for between biological samples.
}

\noteW{
 The sample-associated genes identified \JJL{based on} the threshold carry important transcriptomic characteristics of the corresponding samples and are not simply the complement of housekeeping genes. The identification of sample-associated genes filters out not only housekeeping genes, but also those genes that exhibit little variation across samples. In addition, it is worth noting that the concept of associated genes is not equivalent to specific genes, since associated genes also contain genes that capture transcriptomic similarity among closely related samples, and these genes can be shared \JJL{by several but not all samples}.}

\noteW{
\JJL{To the best of our knowledge,} Le et al \cite{le2010cross} is the only previous attempt other than correlation-based methods to compare biological samples across species. This method compares expression experiments from different species through a newly defined distance metric between the ranking of orthologous genes in the two species. However, their method relies on a large training dataset of known similar samples to learn the parameters for distance functions, and is thus not practical for finding novel patterns \JJL{of biological samples from rarely studied species such as \textit{D. rerio}}. \JJL{Another} advantage of TROM compared with this method is that TROM can identify informative associated genes that enable various downstream analyses.
}

\section{Conclusion}
TROM, a testing-based method, is introduced for finding correspondence patterns among transcriptomes of the same or different species. We demonstrate the greater power of TROM compared to correlation measures in finding transcriptomic similarity in terms of highly expressed genes. We apply TROM to find correspondence maps of developmental stages within and between multiple species, and we show that the associated genes TROM identifies for developmental stages can be used to construct developmental trees in these species. We also show that TROM is robust to data normalization and platform difference of microarray and RNA-seq. In addition, we design a systematic approach for selecting a key threshold parameter in TROM. We implement the TROM method in an R package, which provides functions with flexibility for illustration and customization and can be easily integrated into existing comparative genomic pipelines.

%\begin{acknowledgements}
%If you'd like to thank anyone, place your comments here
%and remove the percent signs.
%\end{acknowledgements}

% BibTeX users please use one of
%\bibliographystyle{plain} 
%\bibliographystyle{spbasic}      % basic style, author-year citations
\bibliographystyle{spmpsci}      % mathematics and physical sciences
\bibliography{TROM}   % name your BibTeX data base

\newpage
\section*{Appendix}
\setcounter{table}{0}
\renewcommand{\thetable}{A\arabic{table}}
\renewcommand{\theHtable}{A\arabic{table}}
\setcounter{figure}{0}
\renewcommand{\thefigure}{A\arabic{figure}}
\renewcommand{\theHfigure}{A\arabic{figure}}

\begin{longtable}{cc}%[phb]
\caption{Selected $z$-score threshold for different species}\\
%\begin{tabular}{cc}

\hline
\noteJJL{Species} & threshold of $z$-scores  \\
\hline
\emph{D. melanogaster} (RNA-seq) & $1.8$ \\
%\hline
\emph{C. elegans} &  $2.0$\\
%\hline
\emph{D. melanogaster} (microarray) & $0.9$\\ 
%\hline
\emph{D. ananassae} & $0.9$ \\
%\hline
\emph{D. simulans} & $0.9$ \\
%\hline
\emph{D. persimilis} & $0.9$ \\
%\hline
\emph{D. pseudoobscura} &  $0.8$\\
%\hline
\emph{D. virilis} & $1.1$ \\
%\hline
mouse liver & $1.4$\\
%\hline
sea urchin & $1.1$\\
%\hline
\emph{D. rerio} & $1.0$\\
\hline
%\end{tabular}
\label{tab:z_thre}
\end{longtable}

%\begin{table}[b]
\begin{longtable}{p{0.2\textwidth}p{0.8\textwidth}}
\caption{Description of sample labels of different species}\\
\hline
\noteJJL{Species} & Sample labels and corresponding explanation  \\
\hline
\emph{D. melanogaster} (RNA-seq) & Embryo0-2h, Embryo2-4h, Embryo4-6h, Embryo6-8h, Embryo8-10h, Embryo10-12h, Embryo12-14h, Embryo14-16h, Embryo16-18h, Embryo18-20h, Embryo20-22h, Embryo22-24h, L1 (L1 stage larvae), L2 (L2 stage larvae), L3+12h (L3 stage larvae, 12 hr post-molt), L3PS1-2 (L3 stage larvae, dark blue gut, puff stage 1-2), L3PS3-6 (L3 stage larvae, light blue gut, puff stage 3-6), L3PS7-9 (L3 stage larvae, clear gut puff stage 7-9), Prepupae (White prepupae), Prepupae+12h (Pupae, 12 hours after white prepupae), Prepupae+24h (Pupae, 24 hours after white prepupae), Prepupae+2d (Pupae, 2 days after white prepupae), Prepupae+3d (Pupae, 3 days after white prepupae), Prepupae+4d (Pupae, 4 days after white prepupae), Male+1d (Adult male, one day after eclosion), Male+5d (Adult male, 5 days after eclosion), Female+1d (Adult female, one day after eclosion), Female+5d (Adult female, 5 days after eclosion), Female+30d (Adult female, 30 days after eclosion) \\
\hline
\emph{C. elegans} & EE\_50-0(embryo 0 mins), EE\_50-30 (embryo 30 mins), EE\_50-60, EE\_50-90, EE\_50-120, EE\_50-150, EE\_50-180, EE\_50-210, EE\_50-240, EE\_50-300, EE\_50-330, EE\_50-360, EE\_50-390, EE\_50-420, EE\_50-450, EE\_50-480, EE\_50-510, EE\_50-540, EE\_50-570, EE\_50-600, EE\_50-630, EE\_50-660, EE\_50-690, EE\_50-720, L1 (larva L1), LIN35 (larva L1 lin35), L2 (larva L2), L3 (larva L3), L4 (larva L4), L4MALE (larva L4 male), YA (young adult), AdultSPE9 (adult spe9), DauerEntryDAF2, DauerDAF2, DauerExitDAF2 \\
\hline
\emph{Drosophila} (microarray) & E0-2h (Embryo 0-2h), E2-4h (Embryo 2-4h), E4-6h (Embryo 4-6h), E6-8h (Embryo 6-8h), E8-10h (Embryo 8-10h), E10-12h (Embryo 12-14h), E14-16h (Embryo 14-16h), E16-18h (Embryo 16-18h), E18-20h (Embryo 18-20h), E20-22h (Embryo 20-22h), E22-24h (Embryo 22-24h), E22-24h (Embryo 24-26h)\\
\hline
mouse liver & E11.5 (embryonic day 11.5), E12.5 (embryonic day 12.5), E13.5 (embryonic day 13.5), E14.5 (embryonic day 14.5), E15.5 (embryonic day 15.5), E16.5 (embryonic day 16.5), E17.5 (embryonic day 17.5), E18.5 (embryonic day 18.5), Day0 (the day of birth), Day3, Day7, Day14, Day21, and NL (normal adult liver)\\
\hline
sea urchin & 00 hpf (0 hours post-fertilization), 10 hpf, 18 hpf, 24 hpf, 30 hpf, 40 hpf, 48 hpf, 56 hpf, 64 hpf, 72 hpf\\
\hline
\emph{D. rerio} & 0min (egg 0min), 15min (zygote 15min), 45min(cleavage 45min), 1h15min(cleavage 1h15min), 1h45min(cleavage 1h45min), 2h15min(blastula 2h15min), 2h45min(blastula 2h45min), 3h20min(blastula 3h20min), 4h(blastula 4h), 4h40min(blastula 4h40min), 5h20min(gastrula 5h20min), 6h(gastrula 6h), 7h(gastrula 7h), 8h(gastrula 8h), 9h(gastrula 9h), 10h(gastrula 10h), 10h20min(segmentation 10h20min), 11h(segmentation 11h), 11h40min(segmentation 11h40min), 12h(segmentation 12h), 13h(segmentation 13h), 14h(segmentation 14h), 15h(segmentation 15h), 16h(segmentation 16h), 17h(segmentation 17h), 18h(segmentation 18h), 19h(segmentation 19h), 20h(segmentation 20h), 21h(segmentation 21h), 22h(segmentation 22h), 23h(segmentation 23h), 1d1h(pharyngula 1d1h), 1d3h(pharyngula 1d3h), 1d6h(pharyngula 1d6h), 1d10h(pharyngula 1d10h), 1d14h(pharyngula 1d14h), 1d18h(pharyngula 1d18h), 2d(hatching 2d), 2d12h(hatching 2d12h), 3d(hatching 3d), 4d(larva 4d), 6d(larva 6d), 8d(larva 8d), 10d(larva 10d), 14d(larva 14d), 18d(larva 18d), 24d(larva 24d), 30d(larva 30d), 40d(larva 40d), 45d(juvenile 45d), 55d(juvenile 55d), 65d(juvenile 65d), 80d(juvenile 80d), 90d(adult 90d female), 3m15d(adult 3m15d female), 4m(adult 4m female), 7m(adult 7m female), 9m(adult 9m female), 1y2m(adult 1y2m female), 1y6m(adult 1y6m female), 1y9m(adult 1y9m), 55d(adult 55d male), 65d(adult 65d male), 80d(adult 80d male), 90d(adult 90d male), 3m15d(adult 3m15d male), 4m(adult 4m male), 7m(adult 7m male), 9m(adult 9m male), 1y2m(adult 1y2m male), 1y6m(adult 1y6m male), 1y9m(adult 1y9m male)\\
\hline
\label{stage_label}
\end{longtable}
%\end{table}

\begin{figure}[!b]%figure1
\centerline{\includegraphics[width=\linewidth, angle=0, scale=1.3]{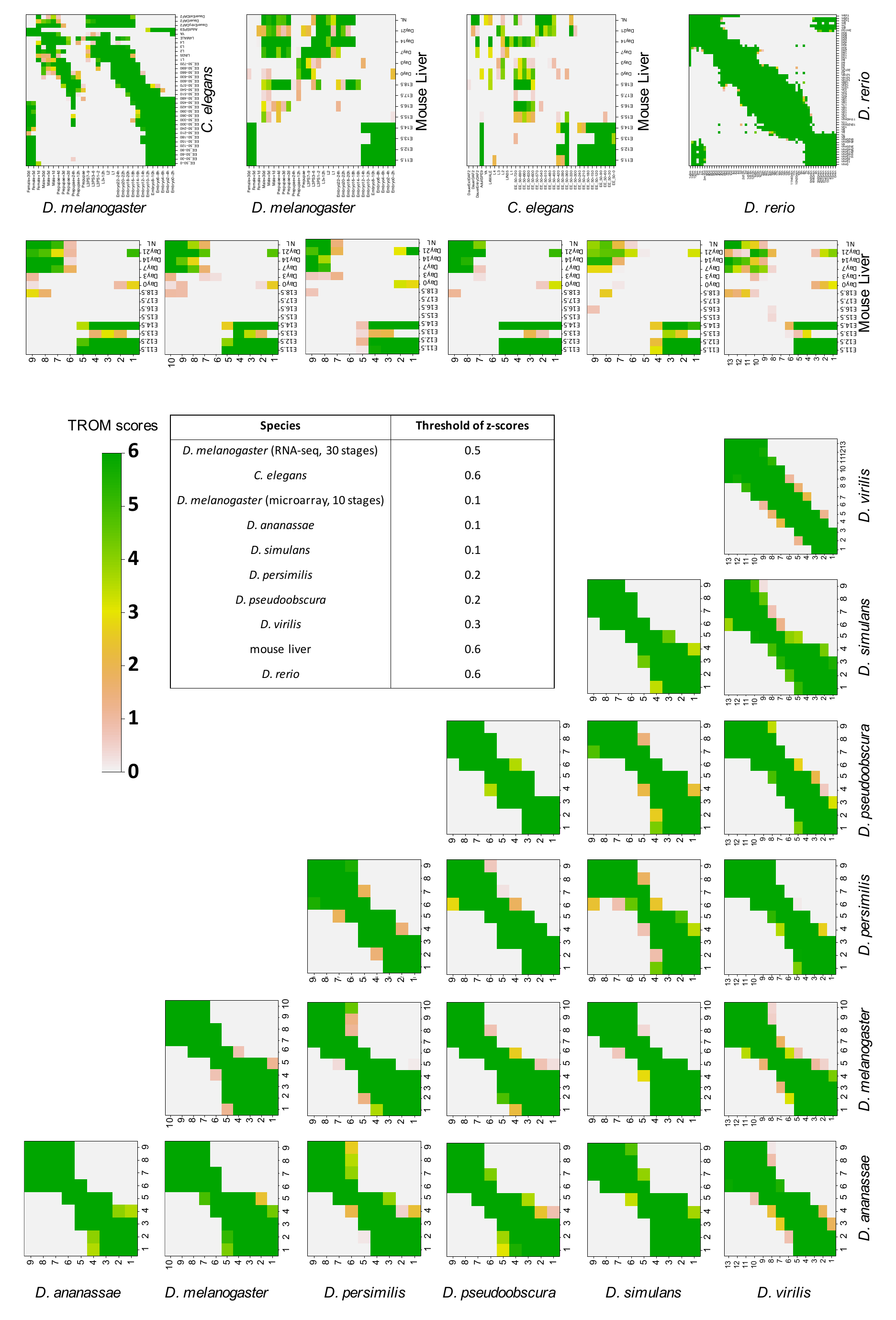}}
\caption{Correspondence maps of within-species and between-species TROM scores (calculated based on the $z$-score thresholds listed in the table). TROM scores are saturated at $6$. The names of the species are marked as row or column labels of the corresponding heatmaps. For the \emph{Drosophila} species the stages labels 1-13 refer to Embryo 0-2h, 2-4h, 4-6h, 6-8h, 8-10h, 10-12h, 12-14h, 14-16h, 16-18h, 18-20h, 20-22h, 22-24h and 24-26h respectively.}
\label{TROM_scores}
\end{figure}

\begin{figure}[!pb]
\centerline{\includegraphics[width=\linewidth]{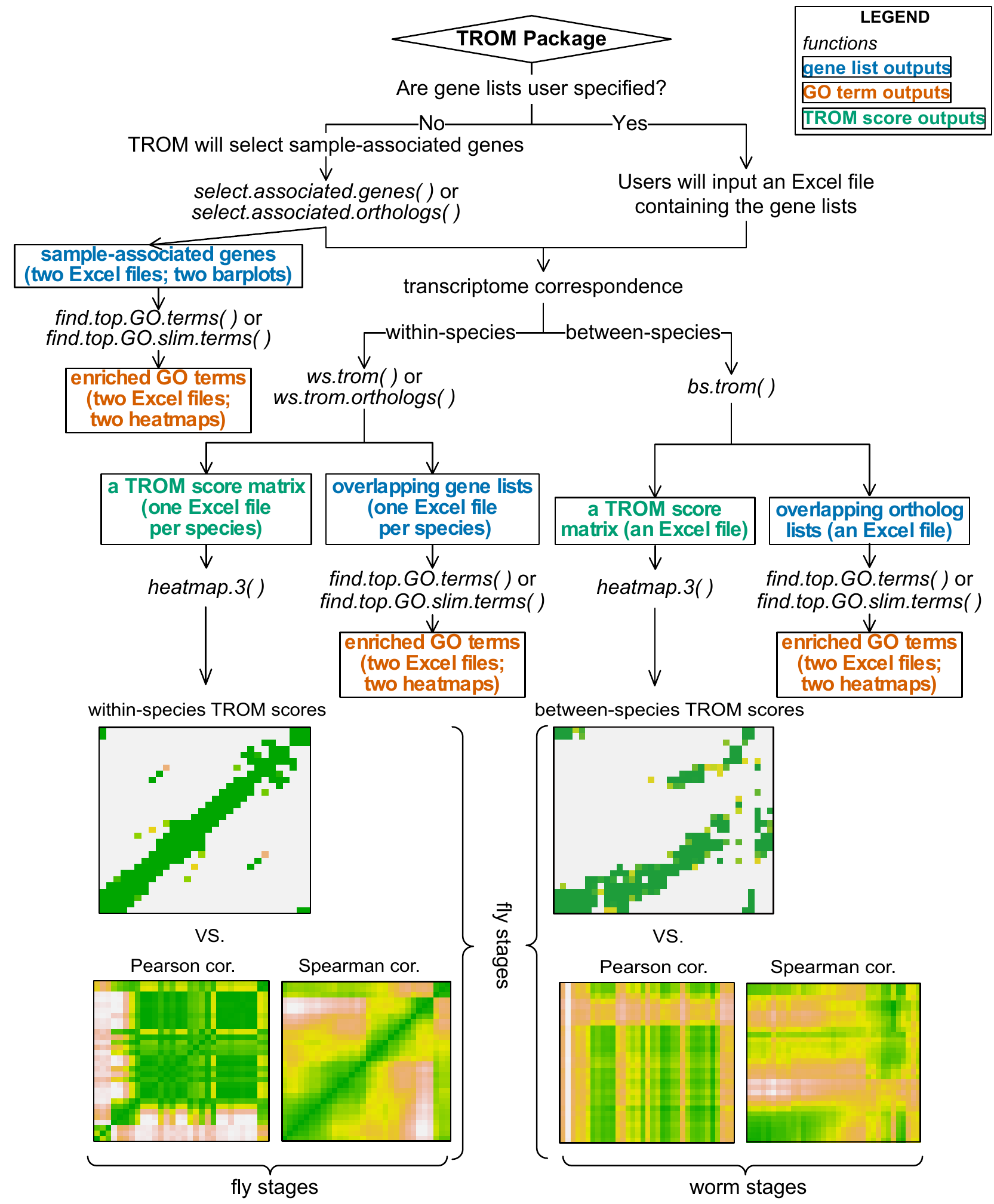}}
\cprotect\caption{Outline of the package \verb|TROM|. The three left (and right) heatmaps illustrate the within-species (and between-species) comparison results by using TROM (with $z$-score threshold 1.5 for both fly and worm), Pearson correlation and Spearman correlation. Greater similarities are shown in darker colors. The results show that compared to the popular Pearson and Spearman correlations, TROM can find clearer correspondence patterns. \verb|TROM| takes gene expression matrices and orthologous genes of the species of interest as input. The functions \verb|select.associated.genes| and \verb|select.associated.|\ \verb|orthologs| select the associated genes of different biological samples among all the genes or only among the genes with orthologs in the other species to be compared with. They also provide graphical summaries of the numbers of selected associated genes and orthologs. The functions \verb|ws.trom| and \verb|ws.trom.orthologs| perform the within-species transcriptome comparison, find the overlapping associated genes between every two samples and calculate within-species TROM scores. The function \verb|bs.trom| performs the between-species transcriptome comparison, find the overlapping associated orthologs between every two samples from different species and calculate the between-species TROM scores. The function \verb|heatmap.3| visualizes the TROM scores in a heatmap, with various add-on options for customization. The functions \verb|find.top.GO.terms| and \verb|find.top.GO.slim.terms| perform gene set enrichment analysis and find top enriched Gene Ontology (GO) terms and GO slim terms in the associated genes. Instead of using the selected associated genes, users may input customized gene lists representing characteristics of different biological samples into the above functions. Please see the package manual and vignette of \verb|TROM| for details.}
\label{outline}
\end{figure}

\begin{figure}[!pb]%figure1
\centerline{\includegraphics[width=\linewidth]{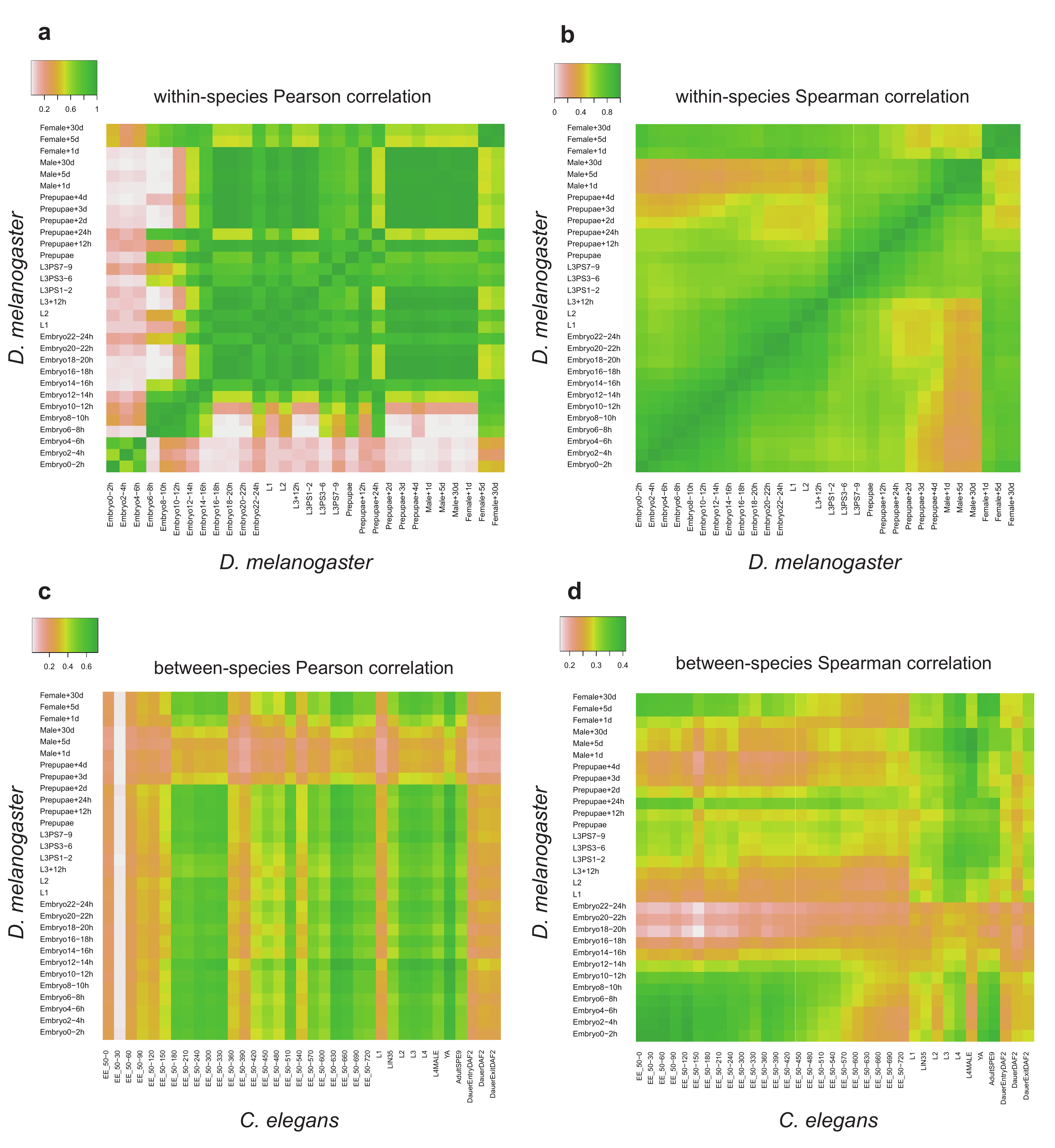}}
\caption{Correlation measures calculated based on the union of associated genes. \textbf{a-b:} Pearson correlation (\textbf{a}) and Spearman correlation (\textbf{b}) for every pair of \emph{D. melanogaster} stages calculated based on the union of associated genes of all stages. 
\textbf{c-d:} Pearson correlation (\textbf{c}) and Spearman correlation (\textbf{d}) for every pair of \emph{D. melanogaster} and \emph{C. elegans} stages calculated based on the union of associated ortholog pairs of all stages.  These heatmaps show that correlation measures calculated based on associated genes only still cannot lead to clear correspondence patterns.
}
\label{cor_associated}
\end{figure}

\begin{figure}[!tpb]%figure1
\centering
\includegraphics[width=\linewidth]{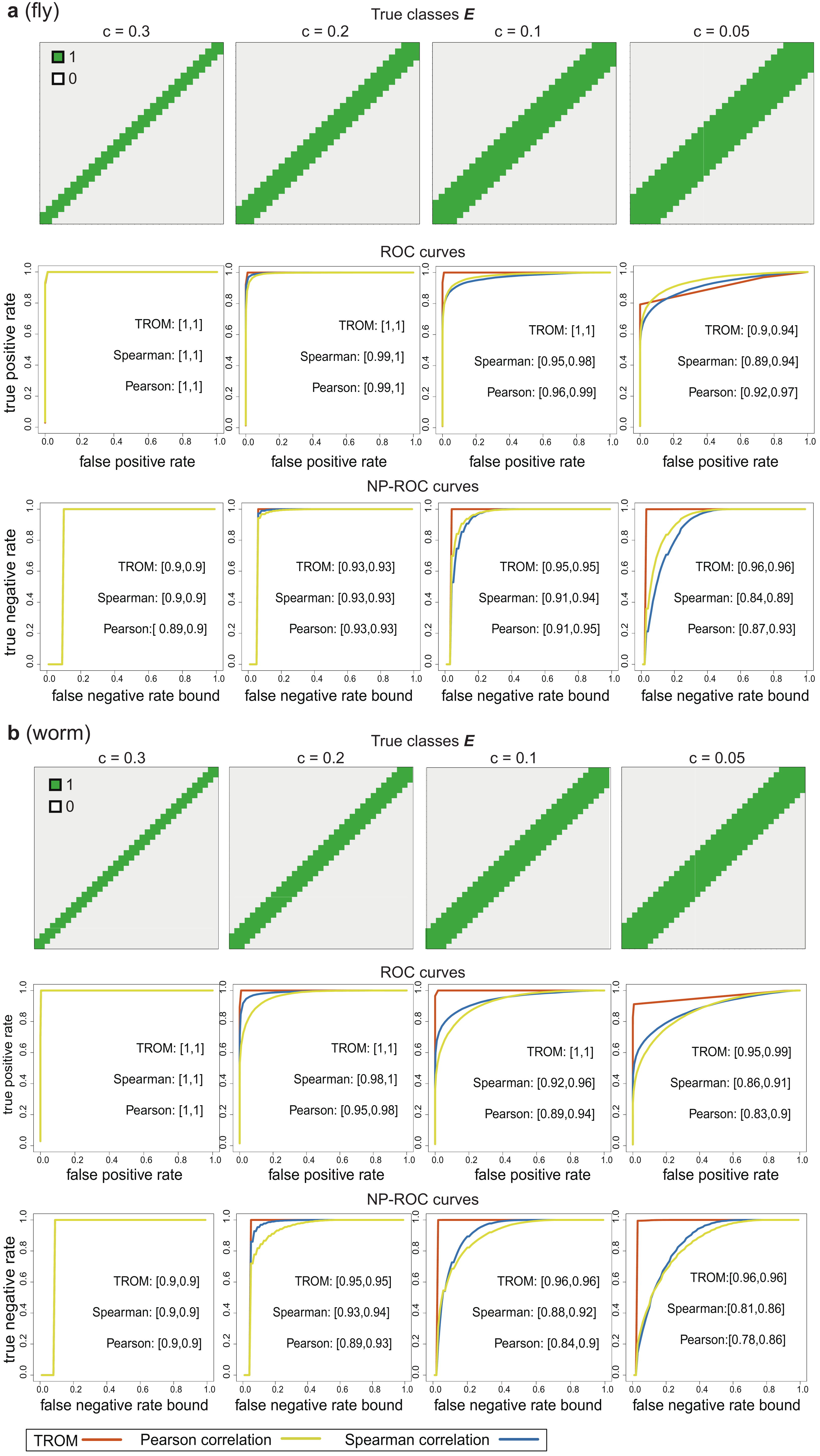}
\caption{Comparison of TROM and Pearson/Spearman correlation on simulated data, with \textbf{a} for fly and \textbf{b} for worm. \JL{In both panels, the first row gives the true sample relationships ($1$: high dependence in associated genes; $0$: otherwise)  defined as in Equation \ref{defineE} for varying $c$. The second row gives the mean receiver operating characteristic (ROC) curves on the $200$ simulated gene expression matrices, given the true labels in the first row. The third row gives the mean Neyman-Pearson receiver operating characteristic (NP-ROC) curves, accordingly. The 95\% confidence intervals of the area under the curve (AUC) are marked next to the curves.}}
\label{simu_roc}
\end{figure}

\begin{figure}[!pb]%figure1
\centerline{\includegraphics[width=\linewidth]{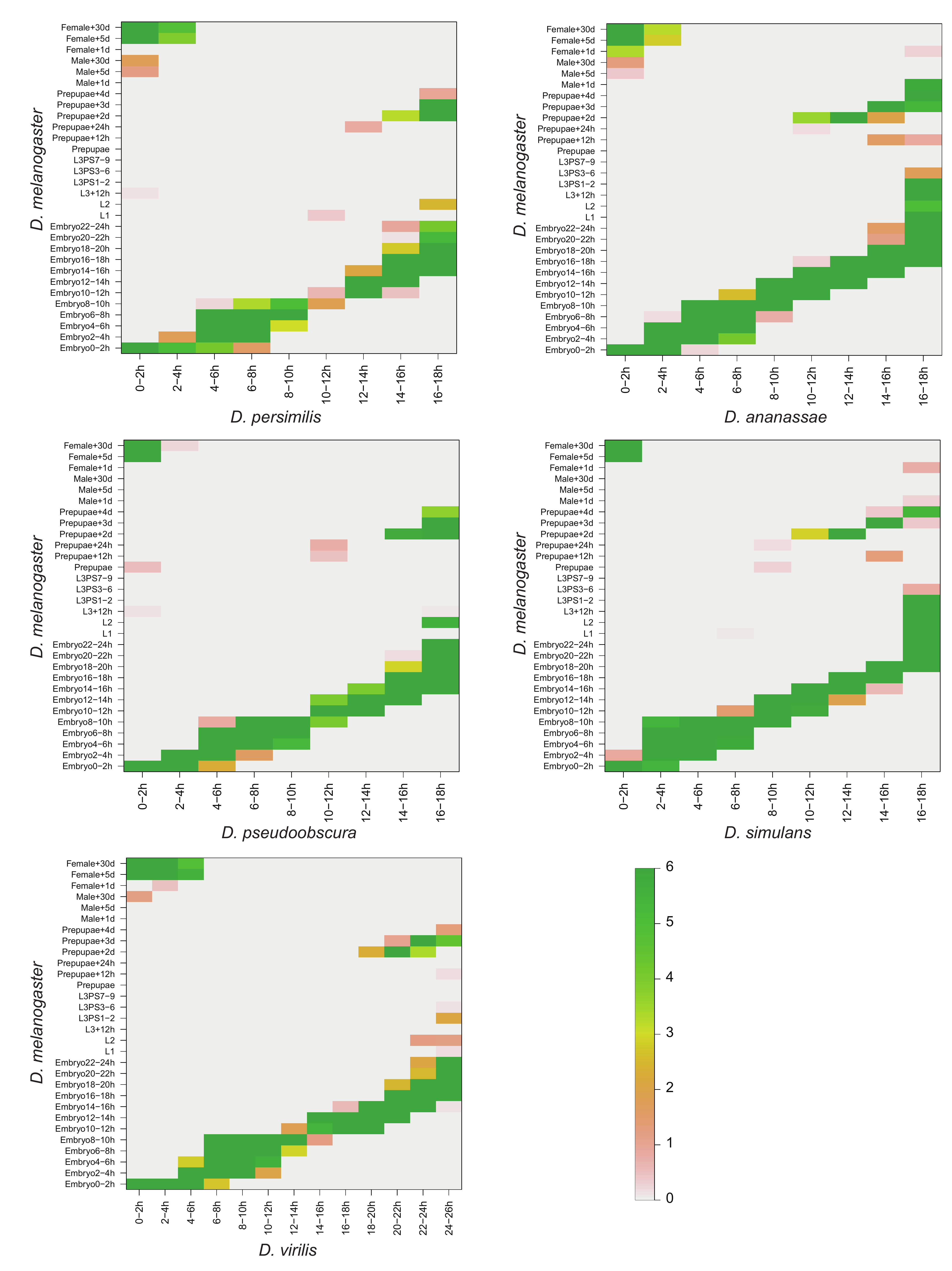}}
\caption{Correspondence maps of developmental stages. TROM scores are calculated using the RNA-seq data of \emph{D. melanogaster} and the microarray data of the other five \emph{Drosophila} species.}
\label{micro_vs_rna}
\end{figure}

\begin{figure}[!pb]%figure1
\centerline{\includegraphics[width=\linewidth]{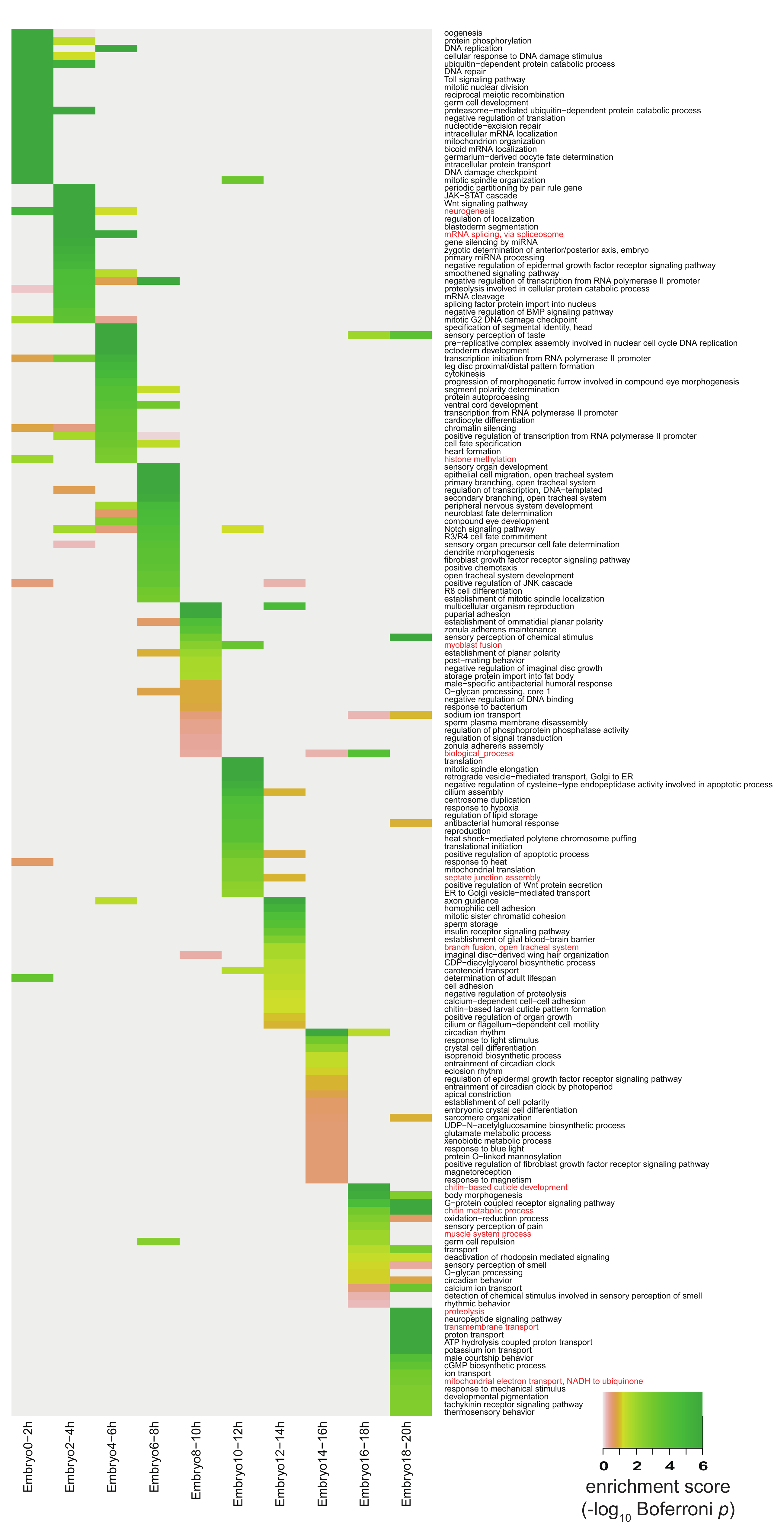}}
\caption{Top 20 enriched biological process GO terms of \textit{D.melanogaster}. The enrichment scores in the heatmap are calculated based on stage-associated genes identified from the RNA-seq data (with $z$-score threshold 1.5) and saturated at 6. For each stage, the common enriched GO terms identified from both microarray (Figure \ref{fly_GO_micro}) and RNA-seq datasets are marked in red color. }
\label{fly_GO_rna}
\end{figure}

\begin{figure}[!tpb]
\centerline{\includegraphics[width=\linewidth]{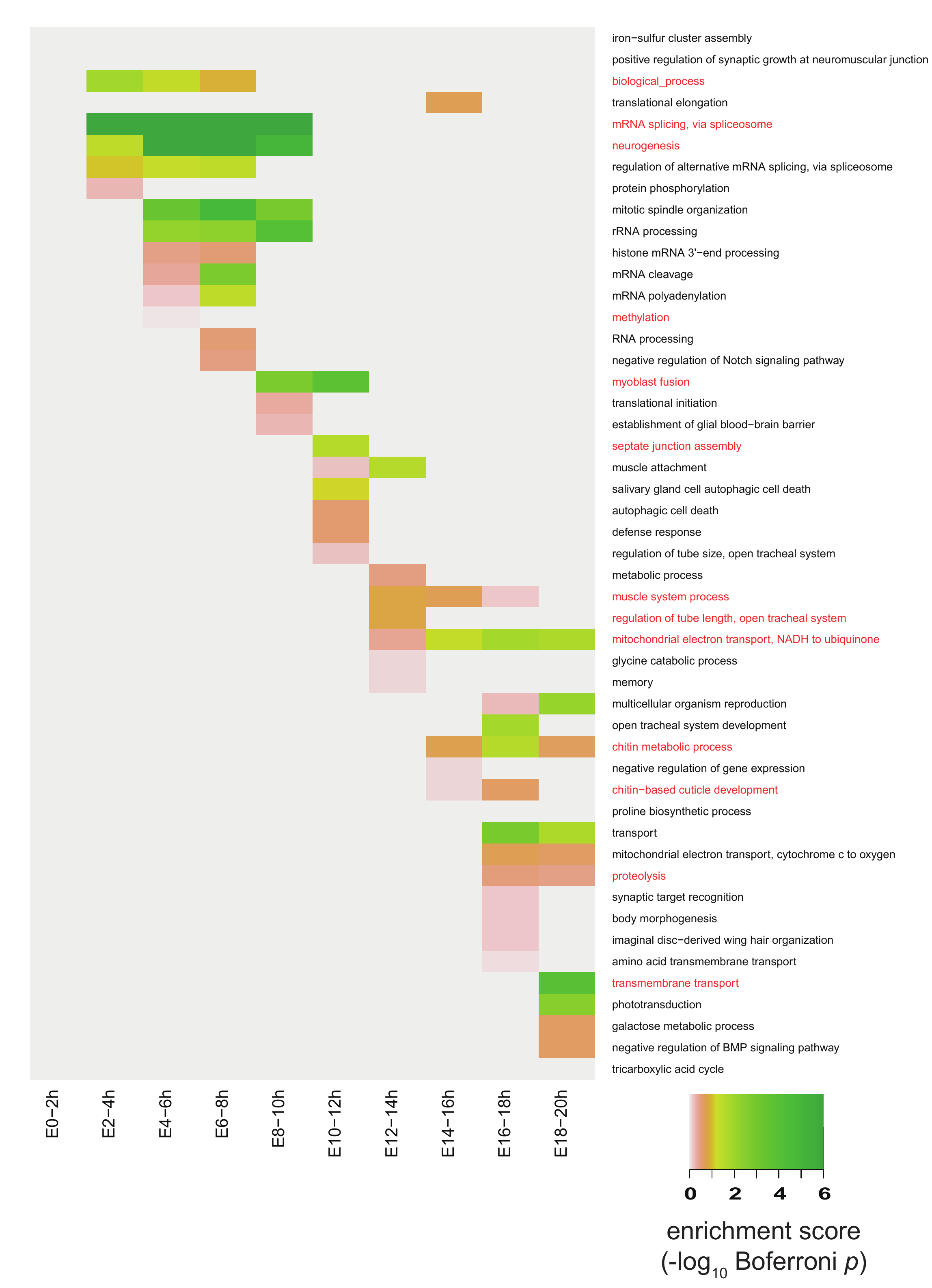}}
\caption{Top 20 enriched biological process GO terms of \textit{D.melanogaster}. The enrichment scores in the heatmap were calculated through stage-associated genes identified from the microarray data (with $z$-score threshold 0.5) and saturated at 6. For each stage, the common enriched GO terms identified from both microarray and RNA-seq (Figure \ref{fly_GO_rna}) datasets are marked in red color.  }
\label{fly_GO_micro}
\end{figure}

\begin{figure}[!pb]
\centerline{\includegraphics[width=\linewidth, scale =1.5]{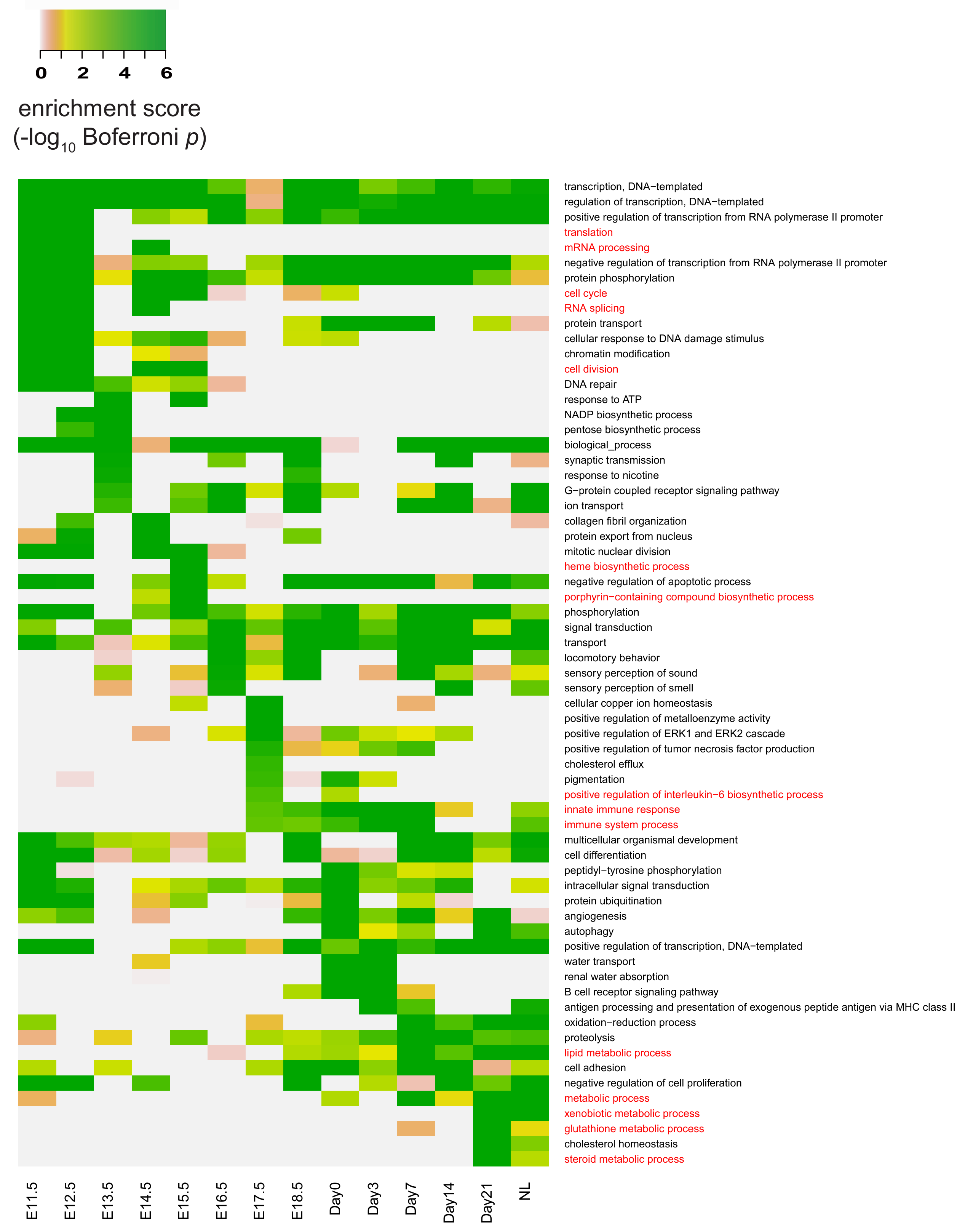}}
\caption{Top 10 enriched biological process GO terms of mouse liver. The enrichment scores in the heatmap 
 were calculated through stage-associated genes identified from the microarray data (with $z$-score threshold 1.5). For each stage, the highly 
 relevant GO terms that have been confirmed in previous studies are marked in red color. }
\label{mouse_GO}
\end{figure}

\begin{figure}[!pb]
\centerline{\includegraphics[width=\linewidth, scale=1.3]{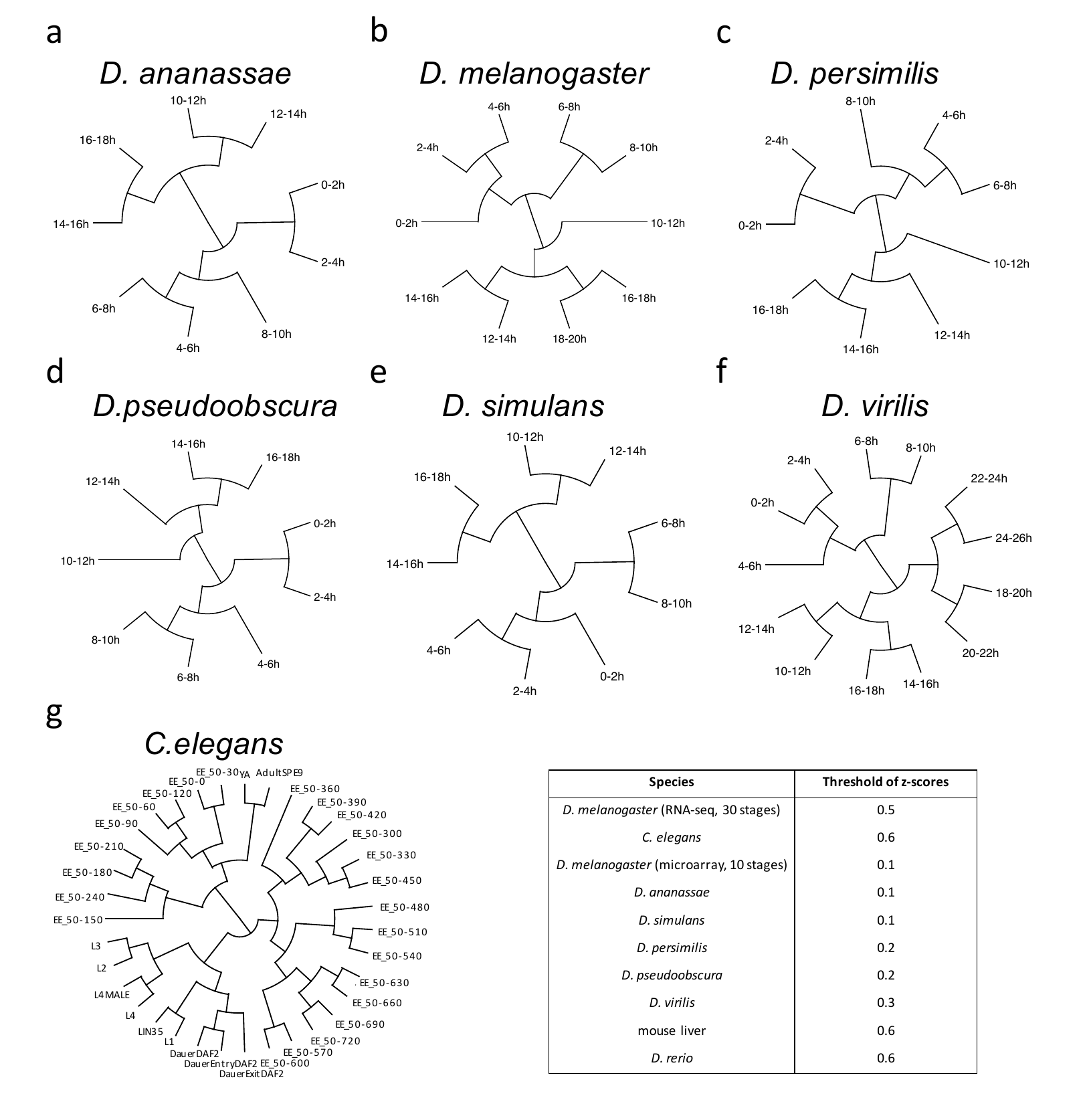}}
\caption{Developmental tree constructed using stage-associated genes (identified with the $z$-score thresholds in the table). \textbf{a}-\textbf{f} are for \emph{Drosophila} species and \textbf{g} is for \emph{C.elegans}.}
\label{tree_supp}
\end{figure}

\begin{figure}[!pb]
\centerline{\includegraphics[width=\linewidth]{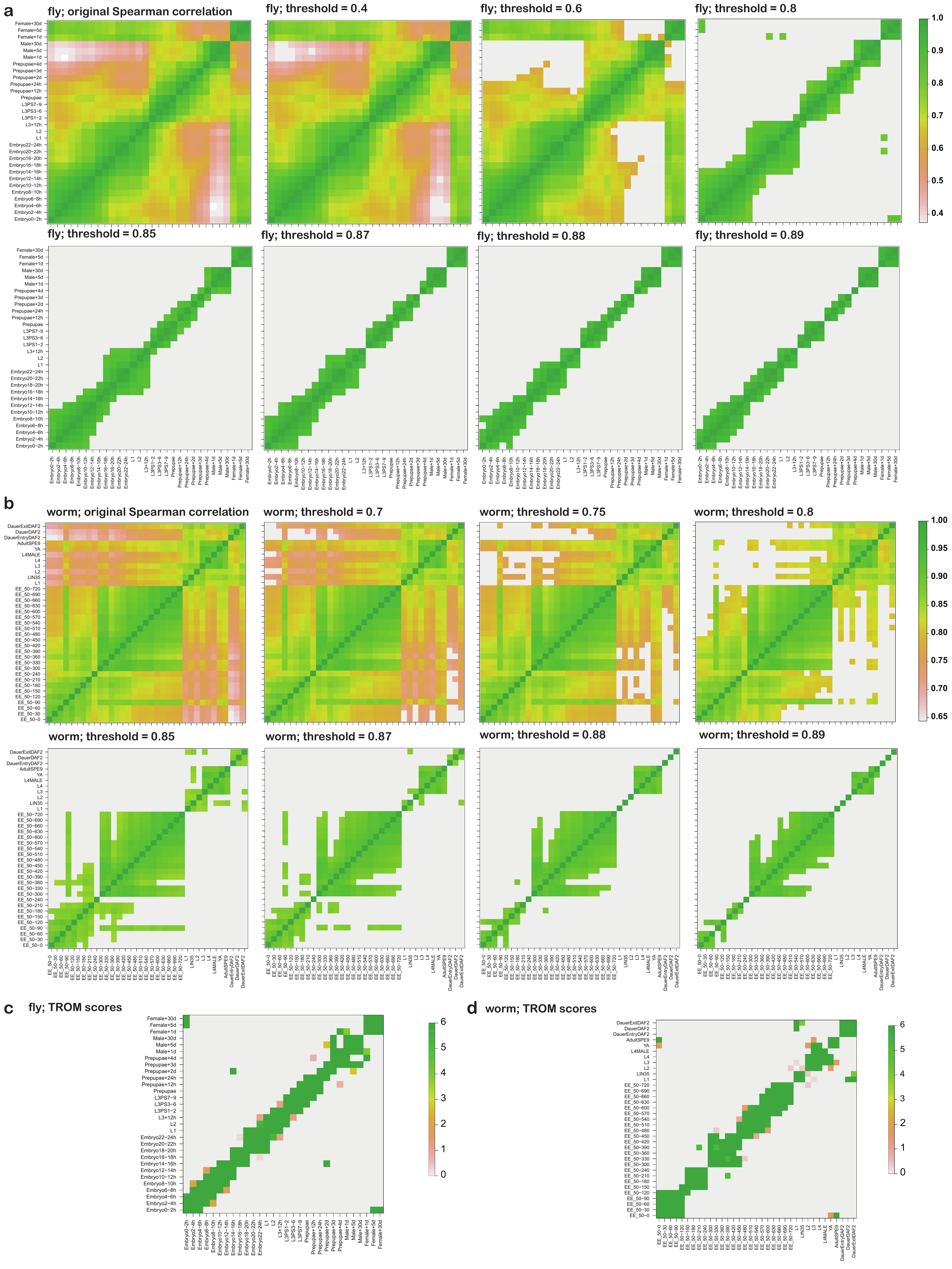}}
\cprotect\caption{Spearman correlation of the developmental stages of \emph{D. melanogaster} (fly) and \emph{C. elegans} (worm). \textbf{a}: The first panel shows Spearman correlation of fly's stages while the rest show Spearman correlation of fly's stages under different thresholds.  \textbf{b}: The first panel shows Spearman correlation of worm's stages while the rest show Spearman correlation of worm's stages under different thresholds. \textbf{c}: TROM scores of fly. \textbf{d}: TROM scores of worm. All the values under the selected threshold are set to the minimum value of each correlation matrix.}
\label{sp_thre}
\end{figure}

\end{document}